\documentclass[draftclsnofoot,onecolumn,11pt,conference]{IEEEtran}
\usepackage{setspace}

\usepackage{cite}
\usepackage{psfrag}
\usepackage{subfigure}
\usepackage{amsmath}
\usepackage{amssymb}

\usepackage[pdftex]{graphicx}

%\paperwidth 8.5in \paperheight 11in \topmargin -0.2in \headheight
%0in \headsep 0.1in \footskip 0in \evensidemargin -0.4in
%\oddsidemargin -0.4in \textwidth 7.0in \textheight 9.25in

\pdfpagewidth 8.5in
\pdfpageheight 11in

\setlength\topmargin{0in}
\setlength\headheight{0.25in}
\setlength\headsep{0in}
\setlength\textheight{9.1in}
\setlength\textwidth{6.5in}
\setlength\oddsidemargin{0in}
\setlength\evensidemargin{0in}

\hyphenation{op-tical net-works semi-conduc-tor IEEEtran}

\pagenumbering{arabic}

\newtheorem{theorem}{Theorem}

\newtheorem{lemma}[theorem]{Lemma}

\newtheorem{proposition}[theorem]{Proposition}

\newcommand{\bydef}{\triangleq}

% blackboard lowercase
%\def\bydef{:=}

\def\bb0{{\mathbb{0}}}

% Bold lowercase
%\def\bydef{:=}

\def\bb{{\mathbf{b}}}

\def\bff{{\mathbf{f}}}
\def\bg{{\mathbf{g}}}
\def\bh{{\mathbf{h}}}

\def\bw{{\mathbf{w}}}
\def\bx{{\mathbf{x}}}
\def\by{{\mathbf{y}}}

\def\b0{{\mathbf{0}}}

% Bold capital letters

\def\bI{{\mathbf{I}}}

\def\bW{{\mathbf{W}}}
\def\bX{{\mathbf{X}}}

% Blackboard capital letters

\def\bbC{{\mathbb{C}}}

\def\bbE{{\mathbb{E}}}

% Caligraphic capital letters

\def\cK{\mathcal{K}}

\def\cN{\mathcal{N}}

\def\cQ{\mathcal{Q}}

\def\cS{\mathcal{S}}

% Sans serif capital letters

% sans serif lowercase
%\def\bydef{:=}

\def\sf0{{\mathsf{0}}}

\def\Nt{{N_t}}

\def\th{{\rm th}}
\def\l{{\ell}}
\def\Bt{{B_{\rm tot}}}
\def\Bk{{B_{k}}}
\def\Bi{{B_{i}}}
\def\Bkl{{B_{k,{\ell}}}}
\def\Dk{{D_{k}}}
\def\Dkl{{D_{k,{\ell}}}}
\def\hk{{\mathbf{h}_{k}}}
\def\gkl{{\mathbf{g}_{k,{\ell}}}}
\def\skl{{\mathbf{s}_{k,{\ell}}}}
\def\wkl{{\mathbf{w}_{k,{\ell}}}}
\def\wk{{\mathbf{w}_{k}}}
\def\sk{{\mathbf{s}_{k}}}
\def\thk{{\tilde{\mathbf{h}}_{k}}}
\def\tgkl{{\tilde{\mathbf{g}}_{k,{\ell}}}}
\def\twkl{{\tilde{\mathbf{w}}_{k,{\ell}}}}

\def\hhk{{\hat{\mathbf{h}}_{k}}}
\def\fk{{\mathbf{f}_{k}}}
\def\fl{{\mathbf{f}_{\ell}}}
\def\hfk{{\hat{\mathbf{f}}_{k}}}
\def\hfl{{\hat{\mathbf{f}}_{\ell}}}
\def\hgkl{{\hat{\mathbf{g}}_{k,{\ell}}}}

\def\nth{n^{\rm th}}
\def\kth{k^{\rm th}}
\def\lth{\ell^{\rm th}}

\doublespacing

\begin{document}

\title{Adaptive Bit Partitioning for\\Multicell Intercell Interference Nulling with\\Delayed Limited Feedback}
\author{\authorblockN{Ramya Bhagavatula and Robert W. Heath, Jr.}
\authorblockA{Department of Electrical \& Computer Engineering \\
Wireless Networking and Communications Group\\
The University of Texas at Austin\\
1 University Station, C0803 \\
Austin, TX 78712-0240 \\
ramya.bh@mail.utexas.edu, rheath@ece.utexas.edu\\}}
\maketitle 

%%%%%%%%%%%%%%%%%%%%%%%%%%%%%%%%%%%%%%%%%%%%%%%%%%%%%%%%%%%%
\begin{abstract}
Base station cooperation can exploit knowledge of the users' channel state information (CSI) at the transmitters to manage co-channel interference. Users have to feedback CSI of the desired and interfering channels using finite-bandwidth backhaul links. Existing codebook designs for single-cell limited feedback can be used for multicell cooperation by partitioning the available feedback resources between the multiple channels. In this paper, a new feedback-bit allocation strategy is proposed, as a function of the delays in the communication links and received signal strengths in the downlink. Channel temporal correlation is modeled as a function of delay using the Gauss-Markov model. Closed-form expressions for bit partitions are derived to allocate more bits to quantize the stronger channels with smaller delays and fewer bits to weaker channels with larger delays, assuming random vector quantization. Cellular network simulations are used to show that the proposed algorithm yields higher sum-rates than an equal-bit allocation technique. %The performance gains obtained using the proposed implementation strategy are also verified using simulations. 
\end{abstract}

\newpage

%%%%%%%%%%%%%%%%%%%%%%%%%%%%%%%%%%%%%%%%%%%%%%%%%%%%%%%%%%%%
\section{Introduction} \label{sec:Intro}  
Multicell base station cooperation can improve sum-rates and reduce outage in cellular systems \cite{Shamai2001a, Jafar2004, Zhang2004,Jing2008, Fettweis2010,Simeone2009, Papadogiannis2008, Samardzija2009, Ali2009, Choi2008, Ng2005, Ng2008,Ekbal2005, Zhang2009c,Lee2008, Lee2009,Bhagavat2009,Bhagavat2010}. By sharing user channel state information (CSI) and/or data and/or precoding matrices via high-capacity wired backhaul links, base stations coordinate transmissions to manage co-channel interference. Consequently, cooperation can be used to improve the performance of systems that have high levels of co-channel interference due to universal frequency reuse, e.g., upcoming commercial wireless standards like 3GPP long term evolution (LTE) Advanced \cite{3GPPAdv}. Knowledge of the desired and interfering CSI at the base stations is important to obtain the full performance gains promised by multicell cooperation. While it is commonly assumed in the literature that full CSI is available at all base stations, the feedback channel has finite bandwidth implying that limited feedback techniques \cite{Love2008} will be employed to obtain information of \emph{multiple} channels at the transmitters \cite{Bhagavat2009}. Further, the multicell cooperative literature also commonly assumes zero-delay backhaul and feedback links. This is not realistic due to propagation and signal processing delays \cite{Zhang2009b}. Successful implementation of cooperative strategies in next generation cellular systems calls for evaluating the performance gains obtained by accounting for finite-bandwidth feedback channels and delayed CSI availability. 

Strategies that utilize full cooperation, e.g. dirty paper coding \cite{Costa1983, Shamai2001a, Zhang2004, Jafar2004}, multicell zero-forcing, minimum mean square error, and null-space decomposition \cite{Zhang2004,Jing2008}, typically yield higher sum-rates. They generally possess, however, high-complexity and involve the exchange of large amount of information between base stations, increasing the load on finite-capacity backhaul links \cite{Fettweis2010,Simeone2009, Papadogiannis2008,Samardzija2009}. In contrast, approaches that employ joint dynamic resource allocation \cite{Ali2009} and/or joint scheduling \cite{Choi2008} to ensure orthogonality among the transmissions of users in neighboring cells tradeoff performance gains in exchange for low complexity and backhaul loads. Partial base station cooperation (known as \emph{coordinated beamforming} in 3GPP LTE Advanced \cite{3GPPAdv}) is an effective compromise. It offers reasonable sum rate improvements and only a small amount of additional backhaul bandwidth for moderate Doppler spreads \cite{Samardzija2009}. Each base station designs beamforming vectors on-site to transmit exclusively to the users in its own cell by exchanging only user CSI on the backhaul. Examples of partial cooperative strategies include MMSE-estimation-based beamforming \cite{Ng2005, Ng2008}, transmit power-minimizing beamforming \cite{Ekbal2005}, intercell interference nulling (ICIN) \cite{Zhang2009c,Jorswieck2008,Lindblom2009}, and sum-rate maximizing beamforming for single-interferer scenarios \cite{Lee2008,Lee2009,Bhagavat2009,Bhagavat2010}.  In this paper, we use ICIN, which is non-iterative, does not suffer from convergence issues (such as those in  in \cite{Ng2005, Ng2008}), and is suitable for multiple interferer scenarios. 

The performance of a multicell cooperative transmission strategy is highly dependent on the quality of the CSI fed back by the users. Most of the literature on multicell cooperation assumes that full CSI is available at the transmitters \cite{Fettweis2010,Simeone2009, Papadogiannis2008, Samardzija2009, Ali2009, Choi2008, Shamai2001a, Zhang2004,Jafar2004, Kang2006,Jing2008, Ng2005, Ng2008, Somekh2006,Ekbal2005,Lee2008, Lee2009}. Limited feedback for multicell systems is a topic of ongoing research \cite{Zhang2009c,Bhagavat2009,Bhagavat2010}. Unfortunately, results from the well-investigated single-cell limited feedback \cite{Love2008} are not directly applicable to the multicell scenario. While the CSI of only one channel is fed back in the single-cell case, cooperative strategies require feedback of CSI from \emph{multiple} base stations using the same feedback link. Further, in single-cell transmission, quantized CSI reaches the base station after experiencing a delay in the feedback channel \cite{Zhang2009b, Ma2009,Isukapalli2007,Hu2008,Akoum2010}. In the multicell cooperative framework, however, quantized CSI is subject to an \emph{additional} source of delay in the backhaul link. The impact of delayed CSI on the performance of (single cell) non-cooperative systems \cite{Zhang2009b, Ma2009,Isukapalli2007,Hu2008,Akoum2010} has been investigated extensively. Note that while \cite{Akoum2010} accounts for inter-cell interference, base station cooperation is not included in the system model. The effect of delayed limited feedback on the performance of cooperative systems has received comparatively less attention.

Jointly quantizing the CSI of multiple channels was proposed as a solution to multicell limited feedback in \cite{Bhagavat2010} for a single interferer. This, however, requires design of special \emph{multi-channel} codebooks, which is a topic of ongoing investigation. Existing codebook designs can be exploited if the CSI of each of the channels is quantized independently. Hence, it is reasonable to quantize the desired and interfering CSI using separate codebooks, by partitioning the available feedback bits among the different channels \cite{Bhagavat2009, Zhang2009c}. Intuitively, the bits allocated to feedback each of the multiple channels should be a function of the signal strength and delay experienced by the channels. For example, a weak interfering channel can be allocated few or no bits since it has a small impact on the sum-rate. The bits can, instead, be allocated to quantize a stronger interferer more finely to reduce the associated quantization error and thereby, reduce its impact on the sum-rate. Similarly, channels experiencing large delays can be assigned fewer bits since nulling out the outdated channel might not contribute to improving sum-rates. The bit allocations derived in \cite{Bhagavat2009} are not applicable to multiple interferer scenarios; \cite{Zhang2009c} does not provide closed-form solutions for the bit-partitioning. Further, \cite{Bhagavat2009, Zhang2009c} assume zero delay backhaul and limited feedback links.

In this paper, we propose a feedback-bit partitioning algorithm to reduce the mean loss in sum-rate due to delayed limited feedback in a multicell cooperative system using ICIN, by accounting for both average signal strength and delay. We consider multi-antenna base stations and single antenna receivers, i.e. multiple-input single-output (MISO) systems. We assume a single active user in each cell, which faces interference from several neighboring cells. We model channel temporal correlation using the Gauss-Markov autoregressive model \cite{Clarke1968,Zhang2009b, Ma2009,Isukapalli2007,Hu2008,Akoum2010} and assume that each user estimates perfectly and quantizes the channels using random vector quantization (RVQ) \cite{Jindal2006}, for analytical reasons. The quantized CSI is fed back to its own base station, which then exchanges this information with adjacent base stations over the backhaul links. Thus, each base station has knowledge of its desired channel and the interference it \emph{causes} to the neighboring users. We first quantify the impact of delayed limited feedback on the performance of the cooperative ICIN strategy by deriving an upper bound on the mean loss in sum-rate. We then develop closed-form expressions to determine feedback-bit allocations to reduce the mean loss in sum-rate. Simulations verify the performance of the proposed algorithm. 

The contributions of the paper are summarized as follows.
\begin{itemize}
 \item{We derive an upper bound on the mean loss in sum-rate due to delayed limited feedback in a MISO system using ICIN, assuming a Gauss-Markov autoregressive channel model and RVQ. The bound is a function of the number of feedback bits allocated to each of the channels, the relative strengths, and delays of the channels.}
 \item{We derive closed-form expressions for computing the feedback bit partitions among the desired and interfering channels, as a function of their relative strengths and delays.}
% \item{We propose to reduce the overhead associated with the proposed adaptive limited feedback technique by partitioning cells into zones corresponding to pre-determined bit assignments.}
  \item{We use simulations to verify that the proposed feedback-bit allocation algorithm increases the sum-rate per base station using cooperative ICIN with delayed limited feedback. }
\end{itemize}
 
The paper is organized as follows. The system model is presented in Section~\ref{sec:SysModel}. Cooperative ICIN, assuming full  and limited CSI is described in Section~\ref{sec:ICIN}. The impact of delayed limited feedback on the mean sum-rate using ICIN is presented in Section~\ref{sec:Delay_BD}. The feedback bit-partitioning algorithm is proposed in Section~\ref {sec:FB_BitPart} to reduce the mean loss in sum-rate. In Section~\ref{sec:Results}, simulations are presented to verify that the mean sum-rate can be improved using the feedback technique in this paper. Finally, the conclusions are provided in Section~\ref{sec:Conclusion}.

\noindent\textbf{Notation:} In this paper, $\bX$ refers to a matrix and $\bx$ to a vector. $\bX^T$ and $\bX^*$ denote the transpose and Hermitian transpose, respectively. The pseudo-inverse of $\bX$ is given by $\bX^\dag$. An identity matrix of size $R \times R$ is denoted by $\bI_R$. $\bbE\{.\}$ refers to the expectation. $\|\bx\|$ stands for the Frobenius norm of $\bx$. $\cN_c(\mu,\sigma^2)$ refers to a complex Gaussian distribution with mean $\mu$ and variance $\sigma^2$. The $n^{\rm th}$ column $\bX$ is denoted by $\bX(:,n)$. The angle between two vectors, $\bx$ and $\by$ is denoted by $\theta_{(\bx,\by)}$. The cardinality of a set $\cS$ is denoted by $|\cS|$.

%%%%%%%%%%%%%%%%%%%%%%%%%%%%%%%%%%%%%%%%%%%%%%%%%%%%%%%%%%%%
\section{System Model} \label{sec:SysModel}
Consider the cellular system in Fig.~\ref{fig:SystemModel}. We assume that the base station in each cell serves a single active user, using intra-cell time division multiple access (TDMA) or a comparable orthogonal access strategy and that the user faces interference from the neighboring cells \cite{Shamai2001a}. The strength of the interfering signals at each user depends on the  respective propagation channels and the distance between the interfering base station and user. This is similar to the approaches used in \cite{Jing2008,Bhagavat2009,Bhagavat2010}. The number of cells in the multicell system is denoted by $K$. We index the users in each cell by the base station they obtain their desired signal from, i.e. the $k^{\rm th}$ base station services the $k^{\rm th}$ user, for $k=1,\ldots,K$. We assume that all the base stations are equipped with $\Nt$ antennas, while each user supports a single receive antenna. The channel corresponding to the desired signal between the $k^{\rm th}$ base station and $k^{\rm th}$ user is denoted by $\bh_k[n]\in \bbC^{\Nt \times 1}$, at the $\nth$ discrete-time instant. The interfering channel between the $k^{\rm th}$ user and the $\l^{\rm th}$ base station is given by $\bg_{k,\l}[n] \in \bbC^{\Nt \times 1}$ for $k\neq\l$. This is illustrated in Fig.~\ref{fig:SystemModel}. 

Note that the results derived in the paper can be extended to receivers with multiple antennas using multiuser eigenmode transmission, a block-diagonalization based technique, instead of ICIN to null out the interference between multiple users \cite{Boccardi2007}. In this paper, we model the desired and interfering channels by the i.i.d. Rayleigh fading model,  where each entry is unit variance complex Gaussian independent and identically distributed (i.i.d.) according to $\cN_c(0,1)$. While it is recognized that the Rayleigh fading model does not model realistic propagation channels accurately, we use the i.i.d. Gaussian assumption to facilitate the limited feedback analysis. %We plan to extend the limited feedback results presented in the paper to correlated channels in future work. %The transmitted signals are subjected to large-scale fading, which includes distance-dependent path-loss and shadowing effects, and small-scale fading. After averaging over the small-scale fading effects, the desired and interfering signal powers received at the user terminal are denoted by $\gamma_{k,{\rm (d)}}$ and $\gamma_{k,{\rm (i)}}$, respectively, for the $k^{\rm th}$ user. To facilitate analysis in different loss settings, we let $\gamma_{k,{\rm (i)}} = \alpha_k \gamma_{k,{\rm (d)}}$, where $\alpha_k \in [0,~1]$ (i.e. the interfering signal strength can at most be equal to that of the desired signal).

\subsection{Discrete-Time Input-Output Model}\label{sec:LFModel}
The symbol transmitted from the $k^{\rm th}$ base station (intended for the $k^{\rm th}$ user) is denoted by $s_k$, where $\bbE\{|s_k|^2\} = E_{s}$. Each user is assumed to face interference from $K-1$ neighboring base stations, each transmitting with energy $E_s$. The path-loss incurred by the desired signal is given by $L_k$, and that by the interfering signal from the $\lth$ base station to the $\kth$ user is given by $L_{k,\l}$. The received signal powers of the desired and interfering signals are then given by $\gamma_k = E_{s}/L_k$ and $\gamma_{k,\l} = E_{s}/L_{k,\l}$, respectively. We define the interference to signal noise ratio (ISR) of the $\l^{\rm th}$ interferer to the $k^{\rm th}$ user by $\alpha_{k,\l} = \gamma_{k,\l}/\gamma_{k}$, where $\l = 1,\ldots,K$, $\l\neq k$. Further, $\alpha_{k,\l} \in [0,1]$ (i.e. the interfering signal strength can at most be equal to that of the desired signal, otherwise the user will associate with a different base station). Note that a similar parameter is used in \cite{Jing2008,Bhagavat2009,Bhagavat2010} to model the received power of the interfering signal with respect to the desired signal. Using the narrowband flat-fading model, the baseband discrete-time input-output relation for the user in the $k^{\rm th}$ cell is given by
\begin{equation}
y_k[n] = \sqrt{\gamma_k} \bh_{k}^*[n] \bff_{k}[n] s_{k}[n] + \sum_{\substack{\l = 1 \\ \l \neq k}}^{K }\sqrt{\alpha_{k,\l}\gamma_k}\bg_{k,\l}^*[n]\bff_{\l}[n] s_{\l}[n] + v_k[n] ,\label{eqn:InpOutReln}
\end{equation}
where $y_k[n] \in \bbC$ is the received signal at the $k^{\rm th}$ user and $\bff_k[n]\in \bbC^{\Nt \times 1}$ is the unit-norm beamforming vector at the $k^{\rm th}$ base station, at the $n^\th$ time instant. Finally, $v_k[n] \in \bbC$ is complex additive zero-mean white Gaussian noise at the $k^{\rm th}$ user, with $\bbE\{|v_k|^2\} = N_o$. We denote the received desired and interfering signal to noise ratios (SNR) as $\rho_k = \gamma_k/N_o$ and $\alpha_{k,\l}\rho_k$, respectively. The base stations are assumed to have perfect knowledge of $\rho_{k}$. This is a popular assumption in literature \cite{LF-Assumption1,LF-Assumption2,Huang2009}. It was shown in \cite{Huang2009} that SNR quantization does not effect the sum-rates of a single-cell multiuser MIMO system signiÞcantly. To the best of our knowledge, the effect of SNR quantization on the sum-rates of a multicell system has not yet been investigated. %In this paper, we assume that the base stations have perfect knowledge of the SNR.

The signal to interference noise ratio (SINR) of the $k^{\rm th}$ user at the $n^\th$ instant is given by
\begin{equation}
\mathsf {SINR}_k[n] = \frac{\rho_{k}|\bh^*_{k}[n]\bff_{k}[n]|^2}{1 + \sum_{\substack{\l = 1 \\ \l \neq k}}^{K }\alpha_{k,\l}\rho_k|\bg_{k,\l}^*[n]\bff_{\l}[n]|^2}
\label{eqn:SINR1}
\end{equation}
The sum-rate of all the users within the system is 
\begin{equation}
R_s[n] = \sum_k \log_2\left( 1 + \mathsf{SINR}_k[n] \right).
\label{eqn:SumRate}
\end{equation}
The sum-rate, hence, depends on beamforming vectors $\bff_k[n]$, which are designed using quantized channel state information. Note that in the remainder of this paper, we assume that $k\neq\l$ unless otherwise mentioned.

\subsection{Limited Feedback Model}\label{sec:LFModel}
%In frequency division duplex systems, a finite-bandwidth feedback link is used to transmit CSI to the base station via limited feedback techniques. In multicell cooperative strategies, the CSI of multiple channels needs to be estimated at the receiver terminal and quantized using a codebook known to both the transmitter and receiver. As quantized CSI of \emph{multiple} channels is fed back using the finite bandwidth feedback link, the available limited feedback resources have to be utilized efficiently. 

The channel directions, denoted by $\tilde{\bh}_k[n] \bydef \bh_k[n]/\|\bh_k[n]\|$ and $\tilde{\bg}_{k,\l}[n] \bydef \bg_{k,\l}[n]/\|\bg_{k,\l}[n]\|$, are quantized to the unit-norm vectors given by $\hat{\bh}_k[n]$ and $\hat{\bg}_{k,\l}[n]$, respectively, at the $k^\th$ user in the $n^\th$ time instant. Using ICIN, the cooperative strategy used in this paper, beamforming vectors are designed to lie in the null space of the interfering channel \emph{directions}. Since base stations do not require knowledge of the channel gains, users feedback only the estimated channel directions. 

Channel directions can be quantized either jointly using a single codebook \cite{Bhagavat2010} or independently using separate codebooks \cite{Bhagavat2009, Zhang2009c}. It was proposed in \cite{Bhagavat2010} that desired and interfering channel directions can be jointly quantized by introducing an `inherent bias' based on the ISR. Special codebooks are required that are a function of the interfering signal strength, implying that the number of codebooks that need to be designed and stored for joint quantization can be prohibitively large. In contrast, channel directions are quantized separately in independent quantization. Thus, existing single-user codebook designs can be used \cite{Bhagavat2009, Zhang2009c}. The number of bits assigned to each channel direction is varied depending on the respective ISR. Hence, the number of codebooks required only depends on the total number of feedback bits available, $\Bt$ and consequently, is not as large as that required in the joint quantization case. Due to these reasons, we consider separate quantization in this paper, as illustrated in Fig.~\ref{fig:LFModel}.

We assume that each user can utilize $\Bt$ bits for feedback, and that $B_k$ and $B_{k,\l}$ bits are used to quantize $\tilde{\bh}_k[n]$ and $\tilde{\bg}_{k,\l}[n]$ respectively, where $\Bk + \sum_{\substack{\l=1\\ \l\neq k}}^{K} \Bkl = \Bt$. This approach of splitting available feedback bits between the desired and interfering channels was used in \cite{Bhagavat2009,Zhang2009c}. The delay associated with quantizing $\thk[n]$ (to $\hhk[n]$) and feeding back $\hhk[n]$ to the $k^\th$ base station is denoted by $\Dk$. The $k^\th$ user also quantizes the interfering channels, $\gkl[n]$ to $\hgkl[n]$ and feeds back $\hgkl[n]$ to the $k^\th$ base station, which then transmits $\hgkl[n]$ to the $\l^\th$ base station over the backhaul link, incurring a delay of $\Dkl$, where $\l = 1,\ldots,K$. Hence, at the time instant $n$, the $\kth$ base station has knowledge of $\hk[n-\Dk]$ and $\gkl[n-D{\l,k}]$, for all $\l\neq k$.  

In this paper, we assume that the delays, $\Dk$ and $\Dkl$, are known at the base station and users. This is reasonable since delays in the feedback and backhaul link can be measured using training signals, assuming perfect synchronization between the transmitter(s) and receiver(s). Note that $\Dkl\geq\Dk$, since $\Dkl$ includes exchange of CSI over the backhaul link, in addition to limited feedback from the $k^\th$ user. 

\subsection{Gauss-Markov Model for CSI Delay}
The presence of delay in the system leads to a loss in the sum-rate, which can be evaluated by understanding the relation between the current and delayed CSI. In this paper, the Gauss-Markov block fading autoregressive model is used to characterize the relation between $\bh_k[n]$ ($\gkl[n]$) and $\bh_k[n-\Dk]$ ($\gkl[n-\Dkl]$). It has been shown in the literature \cite{HaykinBook, Turin2001,Tan2000} that the Gauss-Markov autoregressive model is reasonably accurate for relatively small delays in the communication links. Hence, it has been widely used in research to model the effect of delay on the performance of wireless systems using limited feedback \cite{Zhang2009b, Ma2009,Isukapalli2007,Hu2008}. By assuming that $\bh_k[n]$ and $\gkl[n]$ are constant throughout the codeword transmission, the current and delayed CSI are related by 
\begin{eqnarray}
\hk[n] &=& \eta_k\hk[n-D_k] + \sqrt{1-\eta_k^2}\bw_\hk[n], ~~{\rm and} \label{eqn:GM_hk} \\
\gkl[n] & = & \eta_{k,\l}\gkl[n-\Dkl] + \sqrt{1-\eta_{k,\l}^2}\bw_\gkl[n], \label{eqn:GM_gkl}
\end{eqnarray}
where $\bw_\hk[n]$ and $\bw_\gkl[n]$ denote the channel error vectors, and are uncorrelated with $\hk[n-D_k]$ and $\gkl[n-\Dkl]$, respectively. The entries of $\bw_\hk[n]$ and $\bw_\gkl[n]$ are distributed by $\cN_c(0,1)$. The correlation coefficients for the desired and interfering channels are denoted by $\eta_k$ and $\eta_{k,\l}$, respectively. Clarke's autocorrelation model is used to determine $\eta_k$ and $\eta_{k,\l}$ as \cite{Clarke1968, Zhang2009b,Ma2009,Isukapalli2007,Hu2008}
\begin{eqnarray}
\eta_k &=& J_0(2\pi\Dk f_dT_s), ~~{\rm and} \label{eqn:GM_hk} \\
\eta_{k,\l} & = & J_0(2\pi\Dkl f_dT_s), \label{eqn:GM_gkl}
\end{eqnarray}
where $J_0$ is the zeroth order Bessel function of the first kind, $f_d$ is the Doppler spread and $T_s$ is the symbol duration. The Doppler spread, $f_d = \nu f_c/c$, where $\nu$ is the relative velocity of the transmitter-receiver pair, $f_c$ the carrier frequency, and $c$ the speed of light. Note that Clarke's model requires isotropic scattering, which is satisfied in this paper through the Rayleigh fading assumption.

%%%%%%%%%%%%%%%%%%%%%%%%%%%%%%%%%%%%%%%%%%%%%%%%%%%

\section{Inter-Cell Interference Nulling} \label{sec:ICIN}
In this section, we briefly describe ICIN for the setup in Fig.~\ref{fig:SystemModel}. We consider first the ideal case where full CSI is available at all the base stations and the delay associated with feedback and backhaul is zero. We then proceed to the comparatively more realistic limited feedback scenario with non-zero delay.

\subsection{Full CSI and Zero Delay} \label{sec:BD_fullCSI}
The $k^\th$ base station has instantaneous knowledge of not only its own desired channel, $\hk[n]$, but also of the interference caused to neighboring cells, i.e. $\bg_{\l,k}[n], \l = 1,\dots,K,\l \neq k$, made available via the backhaul link. The $\kth$ base station then computes the beamforming vector, $\fk,[n]$, as \cite{Zhang2009c,Jorswieck2008,Lindblom2009}
\begin{equation} 
\fk[n] = \bW_k[n](:,1),~~\rm{where}~\bW_\l[n] = \left(\left[\tilde{\bh}_k[n]~\tilde{\bg}_{1,k}[n]~\ldots~\tilde{\bg}_{K,k}[n]\right]\right)^\dag. \label{eqn:BFVec_FullCSI_ZeroD}
\end{equation}
Since all the channels in $\bW_k[n]$ are independent of each other with high probability, \eqref{eqn:BFVec_FullCSI_ZeroD} ensures perfect interference nulling, i.e. $\bg_{k,\l}^T[n]\fl[n] = 0$, for $\l = 1,\ldots,K,\l\neq k$, when $\Nt \geq K$. The sum-rate, assuming full CSI and zero delay, is given by
\begin{equation}
R_s[n] = \sum_{k=1}^K \log_2\left(1 + |\bh^T_{k}[n]\bff_{k}[n]|^2 \right), 
\label{eqn:Rs_fullCSI_ZeroD}
\end{equation}
where $\fk[n]$ is given by \eqref{eqn:BFVec_FullCSI_ZeroD}. The denominator is nulled out since the beamforming vector at the $\l^\th$ base station is designed to be lie in the null-space of the $\gkl$.  

\subsection{Limited Feedback with Delay} \label{sec:BD_LF}
%In the absence of delay, the $k^\th$ base station has instantaneous quantized information of both, the desired channel, $\hk[n]$ and the interference caused to neighboring cells, $\bg_{\l,k}[n]$ (for $\l=1,\ldots,K$, $\l\neq k$). In practice, however, there is delay associated with (1) feeding back quantized CSI to the corresponding base station, and (2) the exchange of information over the backhaul link. In this section, we account for these two sources of delay, in addition to the impact of quantization. 
As a result of delay, at the $n^\th$ time instant, the $k^\th$ base station has knowledge of its desired channel, $\hhk[n-\Dk]$ and the interference that it causes to the neighboring $K-1$ cells, $\hat{\bg}_{\l,k}[n-\Dkl]$ (for $\l=1,\ldots,K, \l\neq k$). Hence, the beamforming vector at the $n^\th$ time instant, $\hfk[n]$, is designed using the delayed and quantized CSI of the desired channels and the interference caused to other cells 
\begin{equation} \label{eqn:BFVec_LFD}
\hfk[n] = \hat{\bW}_k[n](:,1),~~\rm{where}~\hat{\bW}_k[n] = \left(\left[\hhk[n-\Dk]~\hat{\bg}_{1,k}[n- D_{1,\l}]~\ldots~ \hat{\bg}_{K,k}[n-D_{K,k}]\right]\right)^\dag .
\end{equation}
When $\Nt \geq K$, the beamforming vector lies in the $\Nt - (K-1)$ null-space of the $K-1$ interfering channels. Hence, when $\Nt = K$, $\hfk[n]$ will lie in a one-dimensional subspace, independent of $\hhk$. This implies that if $\Nt = K$, it is not necessary to feedback the quantized desired channel back to the base station, i.e. $\Bk = 0$. In contrast, when $\Nt > K$, $\hhk$ is desirable to determine the best $\hfk[n]$ in the $\Nt - (K-1)$ subspace. 

The sum-rate, assuming limited feedback and delay, is given by
\begin{equation}
\hat{R}_s[n] = \sum_{k=1}^K \log_2\left(1 + \frac{\rho_{k}|\bh^T_{k}[n]\hfk[n]|^2}{1 + \sum_{\substack{\l = 1 \\ \l \neq k}}^{K }\alpha_{k,\l}\rho_k|\bg_{k,\l}^T[n]\hfl[n]|^2} \right), 
\label{eqn:Rs_LFD}
\end{equation}
where $\hfk[n]$ and $\hfl[n]$ are given by \eqref{eqn:BFVec_LFD}. Due to limited feedback, interfering signals in the denominator of \eqref{eqn:SINR1} are not nulled out, i.e. $|\bg_{k,\l}^T[n]\hfl[n]|^2 \neq 0$. Note, however, that $|\hgkl^T[n-\Dkl]\hfl[n]|^2 = 0$. Hence, the non-zero denominator of \eqref{eqn:Rs_LFD} reduces the sum-rate of the overall system. In the next section, we quantify the mean loss in sum-rate caused by delayed limited feedback.

%%%%%%%%%%%%%%%%%%%%%%%%%%%%%%%%%%%%%%%%%%%%%%%%%%%

\section{Impact of Limited Feedback and Delay on Sum-Rate of ICIN} \label{sec:Delay_BD}
We define the mean loss in sum-rate due to delay and limited feedback, using ICIN as
\begin{equation}
\bbE\{\Delta R_s[n]\} \bydef \bbE\{R_s[n]\} - \bbE\{\hat{R}_s[n]\} \label{eqn:lossInSumRate} .
\end{equation}
To simplify analysis, we derive an upper bound on the mean loss in sum-rate by deriving first a lower bound on $\bbE\{\hat{R}_s[n]\}$ given by 
\begin{eqnarray}
\bbE\{\hat{R}_s[n]\} &\geq& \sum_{k=1}^K \log_2\left(\frac{\rho_{k}|\bh^T_{k}[n]\hfk[n]|^2}{1 + \sum_{\substack{\l = 1 \\ \l \neq k}}^{K }\alpha_{k,\l}\rho_k|\hgkl^T[n]\hat{\bff}_{\l}[n]|^2} \right) , \label{eqn:MeanRs_LFD_1}\\
& = & \sum_{k=1}^K  \underbrace{\bbE\left\{ \log_2\left(\rho_{k}|\bh^T_{k}[n]\hfk[n]|^2\right) \right\}}_{R_{k,{\rm (des)}}} - \underbrace{\bbE\left\{\log_2\left(1 + \sum_{\substack{\l = 1 \\ \l \neq k}}^{K }\alpha_{k,\l}\rho_k|\bg_{k,\l}^T[n]\hat{\bff}_{\l}[n]|^2 \right) \right\}}_{R_{k,{\rm(int)}}} .\label{eqn:MeanRs_LFD_2}
\end{eqnarray}
We label the first term on the right-hand side of \eqref{eqn:MeanRs_LFD_2} as ${R_{k,{\rm (des)}}}$ and the second term as $R_{k,{\rm(int)}}$. We obtain a lower bound on the mean sum-rate by deriving a lower-bound on ${R_{k,{\rm (des)}}}$ and an upper bound on $R_{k,{\rm(int)}}$. To derive closed-form bounds on ${R_{k,{\rm (des)}}}$ and $R_{k,{\rm(int)}}$ we use RVQ \cite{Jindal2006}, where each codeword is independent and isotropically distributed on a complex unit $\Nt$ dimensional hypersphere. 

\begin{proposition}\label{lem:meanDesChan}
The mean of $\bbE\left\{ \log_2\left(\rho_{k}|\bh^T_{k}[n]\hfk[n]|^2\right) \right\}$ using RVQ is given by
\begin{align}
\begin{split}
\bbE\left\{ \log_2\left(\rho_{k}|\bh^T_{k}[n]\hfk[n]|^2\right) \right\} &\geq  \log_2(\rho_k\eta^2_{k})+ \log_2(e)\sum_{i=0}^{2^\Bk}{2^\Bk \choose i} (-1)^i \sum_{n=1}^{i(\Nt-1)} \frac{1}{n}  \nonumber\\
&  + \bbE\left\{\log_2\left(\|\hk[n-\Dk]\|^2\Big| \hhk[n-\Dk]\hfk[n] \Big|^2\right)\right\}.  \nonumber
\end{split}
\end{align}
\end{proposition}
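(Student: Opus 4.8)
The plan is to build the three terms on the right-hand side one at a time, peeling the ``extra'' randomness off the inner product $\bh_k^T[n]\hfk[n]$ in two stages. First I would pull the deterministic $\rho_k$ out of the logarithm, writing $\bbE\{\log_2(\rho_k|\bh_k^T[n]\hfk[n]|^2)\} = \log_2\rho_k + \bbE\{\log_2|\bh_k^T[n]\hfk[n]|^2\}$. The factor $\eta_k^2$ and the RVQ sum will then come from separately discarding the Gauss--Markov innovation and the quantization-error direction, both of which I claim are circularly symmetric and independent of the relevant ``center''; the leftover $\bbE\{\log_2(\|\hk[n-\Dk]\|^2|\hhk^T[n-\Dk]\hfk[n]|^2)\}$ is exactly the last term and is left untouched.

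For the Gauss--Markov step I would substitute the relation $\bh_k[n] = \eta_k\hk[n-\Dk] + \sqrt{1-\eta_k^2}\,\bw_{\hk}[n]$ into the inner product. Because $\hfk[n]$ is built solely from the delayed quantized CSI $\hhk[n-\Dk]$ and $\hat{\bg}_{\ell,k}[\cdot]$, it is independent of the innovation $\bw_{\hk}[n]$, and since $\bw_{\hk}[n]$ has i.i.d.\ $\cN_c(0,1)$ entries, $\bw_{\hk}^T[n]\hfk[n]\sim\cN_c(0,1)$ is circularly symmetric. Conditioning on $\hk[n-\Dk]$ and $\hfk[n]$, the quantity $\bh_k^T[n]\hfk[n]$ is thus a fixed center $\eta_k\,\hk^T[n-\Dk]\hfk[n]$ plus a circularly symmetric perturbation. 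I would then invoke the sub-mean-value (subharmonicity) property of $z\mapsto\log_2|z|^2$: averaging this subharmonic function over a rotationally invariant perturbation of a point $a$ can only raise the value above $\log_2|a|^2$. This yields $\bbE\{\log_2|\bh_k^T[n]\hfk[n]|^2\}\geq \log_2\eta_k^2 + \bbE\{\log_2|\hk^T[n-\Dk]\hfk[n]|^2\}$, producing the $\log_2(\rho_k\eta_k^2)$ prefactor.

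The RVQ step repeats the idea one layer down. Writing $\hk[n-\Dk] = \|\hk[n-\Dk]\|\,\thk[n-\Dk]$ (the norm $\|\hk[n-\Dk]\|^2$ factoring through the logarithm and later recombining into the residual term) and using the standard RVQ decomposition $\thk[n-\Dk] = \cos\theta\,\hhk[n-\Dk] + \sin\theta\,\bs$, where $\theta$ is the quantization angle and $\bs$ is a unit vector orthogonal to $\hhk[n-\Dk]$, the rotational invariance of RVQ makes $\bs$ isotropic in the orthogonal complement and independent of $\theta$ and of $\hfk[n]$. Hence $\sin\theta\,\bs^T\hfk[n]$ is circularly symmetric about the center $\cos\theta\,\hhk^T[n-\Dk]\hfk[n]$, and a second application of subharmonicity gives $\bbE\{\log_2|\thk^T[n-\Dk]\hfk[n]|^2\}\geq \bbE\{\log_2\cos^2\theta\} + \bbE\{\log_2|\hhk^T[n-\Dk]\hfk[n]|^2\}$. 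It then remains to evaluate $\bbE\{\log_2\cos^2\theta\}$ in closed form. Using that the best-codeword alignment $C=\cos^2\theta=\max_j|\bc_j^*\thk[n-\Dk]|^2$ has CDF $[1-(1-c)^{\Nt-1}]^{2^{\Bk}}$, I would integrate by parts to obtain $\bbE\{\ln C\} = -\int_0^1 c^{-1}[1-(1-c)^{\Nt-1}]^{2^{\Bk}}\,dc$, expand the bracket by the binomial theorem, and apply the identity $\int_0^1 c^{-1}[1-(1-c)^m]\,dc = \sum_{n=1}^m \frac1n$; this delivers exactly $\log_2(e)\sum_{i=0}^{2^{\Bk}}\binom{2^{\Bk}}{i}(-1)^i\sum_{n=1}^{i(\Nt-1)}\frac1n$, with the $i=0$ term vanishing because the inner sum is empty. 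Combining the three pieces gives the claimed inequality.

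The main obstacle, I expect, is the rigorous justification of the two ``drop the circularly symmetric part'' moves, i.e.\ confirming that $z\mapsto\log_2|z|^2$ is genuinely subharmonic (harmonic off the origin, $-\infty$ at it) and that each perturbation is circularly symmetric \emph{conditioned on its center}. The delicate point is the independence bookkeeping for the beamformer: since $\hfk[n]$ is a nonlinear function of $\hhk[n-\Dk]$ and the independently quantized interferers, I must argue that conditioning on $\hfk[n]$ does not destroy the isotropy of the RVQ error direction $\bs$ in the complement of $\hhk[n-\Dk]$. Once that conditional isotropy, and the analogous independence of the innovation $\bw_{\hk}[n]$, are secured, the remaining order-statistic integral is routine.
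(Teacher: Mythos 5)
Your proof has the same skeleton as the paper's: substitute the Gauss--Markov relation for $\hk[n]$, decompose $\thk[n-\Dk]$ into its component along $\hhk[n-\Dk]$ plus an orthogonal error direction, peel off $\log_2(\rho_k\eta_k^2)$ and $\bbE\{\log_2\cos^2\theta\}$, leave the residual term intact, and evaluate $\bbE\{\ln\cos^2\theta\}$ from the RVQ order statistics. Where you genuinely diverge is the step that discards the innovation term $\sqrt{1-\eta_k^2}\,\bw_{\hk}^T[n]\hfk[n]$ and the quantization-error term $\sin(\theta)\,\sk^T[n]\hfk[n]$: the paper simply declares these ``generally very small,'' replaces $|\hk^T[n]\hfk[n]|^2$ by $\eta_k^2\|\hk[n-\Dk]\|^2\cos^2(\theta)\big|\hhk[n-\Dk]\hfk[n]\big|^2$ as an approximation, and then asserts the resulting expression is a lower bound with no further argument. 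Your subharmonicity step --- for a circularly symmetric perturbation $Z$ independent of the center $a$ given the conditioning, $\bbE\{\log_2|a+Z|^2\}\geq\log_2|a|^2$ by the mean-value property of $\log|z|$ --- supplies exactly the justification the paper omits and converts its approximation into an honest inequality; the conditional-isotropy bookkeeping you flag (the innovation is independent of $\hfk[n]$, and the RVQ error direction is isotropic in the complement of $\hhk[n-\Dk]$ and independent of the interferers' quantizations from which $\hfk[n]$ is also built) does go through. Likewise, your closed-form evaluation of $\bbE\{\ln\cos^2\theta\}$ from the CDF $[1-(1-c)^{\Nt-1}]^{2^{\Bk}}$ is a self-contained derivation of the identity the paper simply imports from \cite{Bhagavat2009}. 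In short: same decomposition and same end formula, but your handling of the cross terms is more rigorous than the paper's own, at the cost of having to verify the circular-symmetry conditions explicitly.
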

\begin{proof}
The proof is given in Appendix \ref{app:meanDesChan}.
\end{proof}
Now, $\bbE\{R_s[n]\} \approx \sum_{k=1}^K \bbE\left\{\log_2\left(\rho_k\|\hk[n]\|^2|\thk^T[n]\bff_{k}[n]|^2 \right)\right\}$. Using \eqref{eqn:lossInSumRate} and \eqref{eqn:MeanRs_LFD_2}, Proposition~\ref{lem:meanDesChan} relates the error from quantizing the desired channel using $\Bk$ bits to the mean loss in sum-rate, $\bbE\{\Delta R_s[n]\}$,  
\begin{align}
\begin{split}
\bbE\{\Delta R_s[n]\}  &  \leq \sum_{k=1}^K\log_2(\eta^2_{k})+ \log_2(e)\sum_{i=0}^{2^\Bk}{2^\Bk \choose i} (-1)^i \sum_{n=1}^{i(\Nt-1)} \frac{1}{n}  + R_{k,{\rm(int)}}, \label{eqn:DeltaRs1}
\end{split}
\end{align}
since $\bbE\left\{\log_2\left(\|\hk[n]\|^2|\thk^T[n]\bff_{k}[n]|^2 \right)\right\} = \bbE\left\{\log_2\left(\|\hk[n-\Dk]\|^2\Big| \hhk[n-\Dk]\hfk[n] \Big|^2\right)\right\}$. To further simplify \eqref{eqn:DeltaRs1}, we express $\sum_{i=0}^{2^\Bk}{2^\Bk \choose i} (-1)^i \sum_{n=1}^{i(\Nt-1)} \frac{1}{n}$ as the sum of beta functions, as shown by Lemma~\ref{lem:sumToBetaSum}. 
\begin{lemma}\label{lem:sumToBetaSum}
The contribution of the desired channel delay and quantization towards the loss in sum-rate, $\sum_{i=0}^{2^\Bk}{2^\Bk \choose i} (-1)^i \sum_{n=1}^{i(\Nt-1)} \frac{1}{n}$ is given by
\begin{align}
\begin{split}
\sum_{i=0}^{2^\Bk}{2^\Bk \choose i} (-1)^i \sum_{n=1}^{i(\Nt-1)} \frac{1}{n} = \frac{1}{\Nt-1}\sum_{i=1}^{\Nt-1} \beta\left(2^\Bk, \frac{i}{\Nt-1}\right) .  \nonumber
\end{split}
\end{align}
\end{lemma}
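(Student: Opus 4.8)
The plan is to convert the alternating binomial sum of harmonic numbers into a single integral, collapse that integral with the binomial theorem, and then recognize what remains as a sum of Euler Beta integrals. The one idea that makes everything work is the integral representation of the harmonic number, $\sum_{n=1}^{N}\frac1n = \int_0^1 \frac{1-t^N}{1-t}\,dt$, which follows from integrating the finite geometric series $\frac{1-t^N}{1-t} = \sum_{j=0}^{N-1}t^j$ term by term. Its value here is that it moves the summation index into an \emph{exponent} of $t$, so that the binomial sum over $i$ can be performed in closed form.

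First I would apply this representation with $N = i(\Nt-1)$ and interchange the (finite, hence unproblematic) sum over $i$ with the integral:
\begin{equation}
\sum_{i=0}^{2^\Bk}\binom{2^\Bk}{i}(-1)^i\sum_{n=1}^{i(\Nt-1)}\frac1n = \int_0^1 \frac{1}{1-t}\sum_{i=0}^{2^\Bk}\binom{2^\Bk}{i}(-1)^i\left(1 - t^{i(\Nt-1)}\right)\,dt. \nonumber
\end{equation}
The constant part of the inner sum is $(1-1)^{2^\Bk}=0$, while the rest is a binomial expansion $\sum_i \binom{2^\Bk}{i}(-t^{\Nt-1})^i = (1-t^{\Nt-1})^{2^\Bk}$, so the integrand collapses to $-(1-t^{\Nt-1})^{2^\Bk}/(1-t)$. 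Note this form is manifestly integrable at $t=1$, since the factor $(1-t^{\Nt-1})^{2^\Bk}$ vanishes there to order $2^\Bk\ge 1$ and cancels the pole of $1/(1-t)$.

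Next I would factor one power off and rewrite $\frac{1-t^{\Nt-1}}{1-t}=\sum_{j=0}^{\Nt-2}t^j$, giving an integrand $(1-t^{\Nt-1})^{2^\Bk-1}\sum_{j=0}^{\Nt-2}t^j$. The substitution $u=t^{\Nt-1}$ turns each term into $\frac{1}{\Nt-1}\int_0^1 u^{(j+1)/(\Nt-1)-1}(1-u)^{2^\Bk-1}\,du = \frac{1}{\Nt-1}\beta\!\left(\tfrac{j+1}{\Nt-1},\,2^\Bk\right)$. Summing over $j=0,\dots,\Nt-2$ and using the symmetry $\beta(x,y)=\beta(y,x)$ reassembles $\frac{1}{\Nt-1}\sum_{i=1}^{\Nt-1}\beta\!\left(2^\Bk,\,\tfrac{i}{\Nt-1}\right)$, the claimed right-hand side.

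There is no serious analytic obstacle; the two points to watch are bookkeeping, not difficulty. The main one is the overall \emph{sign}: the direct computation above produces $-\frac{1}{\Nt-1}\sum_i \beta(\cdot)$, since the collapsed integrand carries a minus sign. A sanity check with $\Bk=0,\ \Nt=2$ confirms this (the left side equals $-1$ while $\beta(1,1)=1$), so the quantity is negative, consistent with its role as a rate-\emph{loss} contribution in Proposition~\ref{lem:meanDesChan}; I would flag that the stated identity should carry an overall minus sign. The secondary point is simply tracking that the $\Nt-1$ terms indexed by $j=0,\dots,\Nt-2$ re-index to $i=1,\dots,\Nt-1$, which the Beta argument $\tfrac{j+1}{\Nt-1}$ handles automatically.
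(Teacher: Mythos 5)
Your proof is correct and follows essentially the same route as the paper's Appendix~B: the integral representation $\sum_{n=1}^{N}\frac{1}{n}=\int_0^1\frac{1-t^N}{1-t}\,dt$, collapsing the binomial sum to $-(1-t^{\Nt-1})^{2^{\Bk}}/(1-t)$, the geometric-series factorization, and the substitution $u=t^{\Nt-1}$ yielding Beta integrals. Your flag about the overall minus sign is well taken: the paper's own appendix in fact ends with $-\sum_{k=1}^{\Nt-1}\beta\left(2^{\Bk},\frac{k}{\Nt-1}\right)$ (while itself dropping the $\frac{1}{\Nt-1}$ factor in its last line), so the identity as printed in the lemma statement is off by a sign, and your version carrying both the minus sign and the $\frac{1}{\Nt-1}$ factor is the correct one.
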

\begin{proof}
The proof is given in Appendix \ref{app:sumToBetaSum}.
\end{proof}

Using Lemma~\ref{lem:sumToBetaSum}, \eqref{eqn:DeltaRs1} is rewritten as
\begin{align}
\begin{split}
\bbE\{\Delta R_s[n]\}  &  \leq \sum_{k=1}^K\log_2(\eta^2_{k})+ \frac{\log_2(e)}{\Nt-1}\sum_{i=1}^{\Nt-1} \beta\left(2^\Bk, \frac{i}{\Nt-1}\right)  + R_{k,{\rm(int)}} . \label{eqn:DeltaRs2}
\end{split}
\end{align}
Intuitively, as $\Bk$ increases, the contribution of the desired channel quantization towards the mean loss in sum-rate reduces.  By definition, we have that $\beta(2^{B_1},x) < \beta(2^{B_2},x)$ for $B_1<B_2$. Hence, Proposition~\ref{lem:meanDesChan} and Lemma \ref{lem:sumToBetaSum} verify that increasing $\Bk$  leading to a smaller value of $R_{k,{\rm (des)}}$.

Using Jensen's inequality, $R_{k,{\rm(int)}}$ can be upper bounded by
\begin{eqnarray}
R_{k,{\rm(int)}} &\leq& \log_2\left(1 + \rho_k\sum_{\substack{\l = 1 \\ \l \neq k}}^{K}\alpha_{k,\l}\bbE\left\{|\bg_{k,\l}^T[n]\hfl[n]|^2\right\}\right). \label{eqn:Tint_1} %\\
%&=&1 + \rho_k\sum_{\substack{\l = 1 \\ \l \neq k}}^{K}\alpha_{k,\l}\bbE\{\|\gkl[n]\|^2\}\bbE\left\{|\tgkl^T[n]\hfl[n]|^2\right\}  , \label{eqn:Tint_2} \\
%&=&1 + \rho_k\sum_{\substack{\l = 1 \\ \l \neq k}}^{K}\alpha_{k,\l}\Nt\bbE\left\{|\tgkl^T[n]\hfl[n]|^2\right\} , \label{eqn:Tint_3} 
\end{eqnarray}
%where \eqref{eqn:Tint_2} uses the fact that the magnitude and direction of $\gkl[n]$, given by $\|\gkl[n]\|$ and $\tgkl[n]$ are independent of each other and \eqref{eqn:Tint_3} is obtained using $\bbE\{\|\gkl[n]\|^2\} = \Nt$. 
We evaluate $\bbE\left\{|\gkl^T[n]\hfl[n]|^2\right\}$ to derive an upper bound on $R_{k,{\rm(int)}}$. 

\begin{proposition}\label{lem:meanIntChan}
The mean of $|\gkl^T[n]\hfl[n]|^2$ using RVQ is given by
\begin{align}
\begin{split}
\bbE\left\{|\gkl^T[n]\hfl[n]|^2\right\} &\leq  1-\eta_{k,\l}^2 + \eta^2_{k,\l} 2^\Bkl\beta\left(2^\Bkl,\frac{\Nt}{\Nt-1}\right)\frac{\Nt}{\Nt-1} .  \nonumber
\end{split}
\end{align}
\end{proposition}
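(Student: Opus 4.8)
The plan is to start from the Gauss--Markov decomposition \eqref{eqn:GM_gkl} and reduce the quantity of interest to a random-vector-quantization error computation. First I would substitute $\gkl[n]=\eta_{k,\l}\gkl[n-\Dkl]+\sqrt{1-\eta_{k,\l}^2}\,\bw_{\gkl}[n]$ into $\gkl^T[n]\hfl[n]$ and expand the squared magnitude:
\begin{align}
\bbE\left\{|\gkl^T[n]\hfl[n]|^2\right\} &= \eta_{k,\l}^2\,\bbE\left\{|\gkl^T[n-\Dkl]\hfl[n]|^2\right\} \nonumber \\
&\quad + (1-\eta_{k,\l}^2)\,\bbE\left\{|\bw_{\gkl}^T[n]\hfl[n]|^2\right\}. \nonumber
\end{align}
The cross term vanishes because the innovation $\bw_{\gkl}[n]$ is, by the model, uncorrelated with (hence, being jointly Gaussian, independent of) the delayed CSI and therefore of the beamformer $\hfl[n]$, which is built only from delayed quantized directions. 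That same independence, together with $\|\hfl[n]\|=1$ and the i.i.d.\ $\cN_c(0,1)$ entries of $\bw_{\gkl}[n]$, gives $\bbE\{|\bw_{\gkl}^T[n]\hfl[n]|^2\}=1$, producing the first term $1-\eta_{k,\l}^2$ of the bound.

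The core of the argument is the delayed term. I would write $\gkl[n-\Dkl]=\|\gkl[n-\Dkl]\|\,\tgkl[n-\Dkl]$ and decompose the delayed direction along and orthogonal to its RVQ representation,
\[
\tgkl[n-\Dkl]=\cos\theta_{k,\l}\,\hgkl[n-\Dkl]+\sin\theta_{k,\l}\,\bs,
\]
with $\theta_{k,\l}=\theta_{(\tgkl[n-\Dkl],\hgkl[n-\Dkl])}$ and $\bs$ a unit vector orthogonal to $\hgkl[n-\Dkl]$. Since ICIN forces $\hgkl^T[n-\Dkl]\hfl[n]=0$, the quantized component drops out, leaving $|\gkl^T[n-\Dkl]\hfl[n]|^2=\|\gkl[n-\Dkl]\|^2\,\sin^2\theta_{k,\l}\,|\bs^T\hfl[n]|^2$. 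The three factors separate under expectation: the squared norm is independent of the direction (hence of the codebook and of $\hfl[n]$) with $\bbE\{\|\gkl[n-\Dkl]\|^2\}=\Nt$; and, conditioned on $\hgkl[n-\Dkl]$ and $\theta_{k,\l}$, the error direction $\bs$ is isotropic in the $(\Nt-1)$-dimensional null space of $\hgkl[n-\Dkl]$ and independent of $\hfl[n]$, which lies in that same null space, so $\bbE\{|\bs^T\hfl[n]|^2\mid\cdot\}=1/(\Nt-1)$. Hence $\bbE\{|\gkl^T[n-\Dkl]\hfl[n]|^2\}=\frac{\Nt}{\Nt-1}\,\bbE\{\sin^2\theta_{k,\l}\}$.

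It then remains to evaluate the mean RVQ quantization error. Using that each codeword satisfies $|\bc_i^*\tgkl|^2$ being Beta-distributed with parameters $1$ and $\Nt-1$, and that the $2^\Bkl$ codewords are independent, I would obtain $\Pr(\sin^2\theta_{k,\l}\ge z)=(1-z^{\Nt-1})^{2^\Bkl}$ and integrate,
\[
\bbE\{\sin^2\theta_{k,\l}\}=\int_0^1(1-z^{\Nt-1})^{2^\Bkl}\,dz=2^\Bkl\beta\!\left(2^\Bkl,\frac{\Nt}{\Nt-1}\right),
\]
the last equality following from the substitution $t=z^{\Nt-1}$ and a $\Gamma$-function simplification. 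Combining the pieces reproduces the stated right-hand side.

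The main obstacle is justifying the factorization in the middle step --- in particular, that the RVQ error direction $\bs$ is uniformly distributed on the unit sphere of the null space independently of both the error magnitude $\sin^2\theta_{k,\l}$ and the beamformer $\hfl[n]$. This rotational-invariance argument is what makes the projection factor exactly $1/(\Nt-1)$ and is where the care lies; the remaining independence claims (norm versus direction, innovation versus beamformer) are routine consequences of the i.i.d.\ Gaussian model. The statement is written with $\le$ rather than $=$ to accommodate the standard RVQ bound invoked for $\bbE\{\sin^2\theta_{k,\l}\}$.
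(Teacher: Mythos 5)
Your proposal is correct and follows essentially the same route as the paper's Appendix~C: the Gauss--Markov substitution combined with the RVQ decomposition of the delayed direction, the observation that the quantized component is annihilated because $\hfl[n]$ lies in the null space of $\hgkl[n-\Dkl]$, the factorization into $\bbE\{\|\gkl[n-\Dkl]\|^2\}=\Nt$, $\bbE\{|\bs^T\hfl[n]|^2\}=1/(\Nt-1)$, and $\bbE\{\sin^2\theta_{k,\l}\}=2^\Bkl\beta\bigl(2^\Bkl,\tfrac{\Nt}{\Nt-1}\bigr)$. The one substantive difference is in your favor: you dispose of the cross term exactly by noting that the innovation $\bw_{\gkl}[n]$ is zero-mean and independent of the beamformer, whereas the paper first applies the triangle inequality and then discards the resulting cross term by an informal smallness argument, so your version is the cleaner justification of the same bound.
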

\begin{proof}
The proof is given in Appendix \ref{app:meanIntChan}.
\end{proof}
Proposition \ref{lem:meanIntChan} relates the mean loss in sum-rate to the number of bits assigned to quantize the interfering channels, $\Bkl$. Intuitively, increasing $\Nt$ by keeping $\Bkl$ fixed will lead to an increase in the quantization error, which will increase the mean loss in sum-rate. This is verified using Proposition \ref{lem:meanIntChan}, where it is seen that for large $\Nt$, $\Nt/(\Nt-1) \rightarrow 1$, causing $\beta\left(2^\Bkl,\frac{\Nt}{\Nt-1}\right)\frac{\Nt}{\Nt-1}$ to become larger. Using \eqref{eqn:DeltaRs2} and Proposition \ref{lem:meanIntChan}, $\bbE\{\Delta R_s[n]\} $ is upper bounded as
%\begin{align}
%\begin{split}
%\bbE\{\Delta R_s[n]\}  &\leq\sum_{k=1}^K\log_2(\eta^2_{k})+ \log_2(e)\sum_{i=0}^{2^\Bk}{2^\Bk \choose i} (-1)^i \sum_{n=1}^{i(\Nt-1)} \frac{1}{n} \\
 %& + \log_2\left(1+\rho_k\sum_{\substack{\l = 1 \\ \l \neq k}}^{K}\alpha_{k,\l} \left( 1-\eta_{k,\l}^2 +\eta^2_{k,\l} 2^\Bkl\beta\left(2^\Bkl,\frac{\Nt}{\Nt-1}\right)\frac{\Nt}{\Nt-1}\right)\right) \label{eqn:lossInSumRate-preFinal} .
%\end{split}
%\end{align}
%The mean loss in sum-rate is given by
\begin{align}
\begin{split}
\bbE\{\Delta R_s[n]\}  &  \leq \sum_{k=1}^K\log_2(\eta^2_{k})+ \frac{\log_2(e)}{\Nt-1}\sum_{n=1}^{\Nt-1} \beta\left(2^\Bk, \frac{n}{\Nt-1}\right) \\
 & + \log_2\left(1+\rho_k\sum_{\substack{\l = 1 \\ \l \neq k}}^{K}\alpha_{k,\l} \left( 1-\eta_{k,\l}^2 +\eta^2_{k,\l} 2^\Bkl\beta\left(2^\Bkl,\frac{\Nt}{\Nt-1}\right)\frac{\Nt}{\Nt-1}\right)\right) \label{eqn:lossInSumRate-preFinal2} .
\end{split}
\end{align}
It is clear from \eqref{eqn:lossInSumRate-preFinal2} that for a given $\{\rho_k,\eta_k\}_{k=1}^K$ and $\{\eta_{k,\l},\alpha_{k,\l}\}_{\substack{k=1\\ k \neq \l}}^K$, the (upper bound on) mean loss in sum-rate will depend on $\Bk$ and $\Bkl$, for all $k$ and $\l$. 

For $\Nt > K$ (where $\Bk \neq 0$), one solution to choosing $\Bk$ and $\Bkl$ is to partition the feedback bits as $\Bk = \Bkl = \Bt/K$. Note that while this solution is simple, it does not account for the different weights associated with each interferer. For example, for a given user $k$, if $\alpha_{k,\l} \approx 0$ for all $\l$, it implies that the interfering signals to the user are too weak to contribute to the mean loss in sum-rate and hence, can be ignored completely. Equal-bit allocation, however, will assign $\Bkl = \Bt/K$ to each interferer and waste $ (K-1)\Bt/K$ bits, while assigning only $\Bt/K$ to quantize the desired channel. Thus, an \emph{adaptive} feedback bit partitioning strategy will efficiently allocate bits to the different channels depending on the respective delays and signal strengths.  

%%%%%%%%%%%%%%%%%%%%%%%%%%%%%%%%%%%%%%%%%%%%%%%%%%%
\section{Feedback-Bit Partitioning} \label{sec:FB_BitPart}
In this section we minimize the upper bound on the mean loss in sum-rate in \eqref{eqn:lossInSumRate-preFinal2}, with respect to the number of feedback bits assigned to the desired and interfering channels at each user. We simplify \eqref{eqn:lossInSumRate-preFinal2} using approximations for the beta functions to ensure analytical tractability and then derive closed-form expressions for the bit-partitioning.

A beta function $\beta(a,b)$ ($=\int_0^1 t^{a-1}(1-t)^{b-1}dt$) can be approximated by $\beta(a,b) \approx \Gamma(b)a^{-b},$ when $a$ is large and $b$ is fixed. Since we are optimizing over $\Bk$ and $\Bkl$ in \eqref{eqn:lossInSumRate-preFinal2}, $b$ is always fixed while $a = 2^\Bk$ (or $2^\Bkl$) is at least equal to 1. Hence, we approximate the right hand side of the expression in \eqref{eqn:lossInSumRate-preFinal2} as
\begin{align}
\begin{split}
\Delta_R  &  \approx \sum_{k=1}^K\log_2(\eta^2_{k})+ \frac{\log_2(e)}{\Nt-1}\sum_{n=1}^{\Nt-1} \Gamma\left( \frac{n}{\Nt-1}\right) \left(2^\Bk\right)^{-\frac{n}{\Nt-1}} \\
 & + \log_2\left(1+\rho_k\sum_{\substack{\l = 1 \\ \l \neq k}}^{K}\alpha_{k,\l} \left( 1-\eta_{k,\l}^2 +\eta^2_{k,\l} 2^\Bkl\Gamma\left( \frac{\Nt}{\Nt-1}\right) \left(2^\Bkl\right)^{-\frac{\Nt}{\Nt-1}}\frac{\Nt}{\Nt-1}\right)\right) \label{eqn:lossInSumRate-approx1} .
\end{split}
\end{align}
For a large $\Nt$, $\left(2^\Bk\right)^{-\frac{n}{\Nt-1}}$ will be relatively much smaller than $\left(2^\Bk\right)^{-\frac{1}{\Nt-1}}$ and hence, can be ignored. By neglecting the higher order terms ($n=2,\ldots,(\Nt-1)$) in the summation
\begin{align}
\begin{split}
\frac{\log_2(e)}{\Nt-1}\sum_{n=1}^{\Nt-1} \Gamma\left( \frac{n}{\Nt-1}\right) \left(2^\Bk\right)^{-\frac{n}{\Nt-1}} \approx \frac{\log_2(e)}{\Nt-1} \Gamma\left( \frac{1}{\Nt-1}\right) \left(2^\Bk\right)^{-\frac{1}{\Nt-1}}.  \label{eqn:lossInSumRate-approx2}
\end{split}
\end{align}
The mean loss in sum-rate can then be further approximated as
\begin{align}
\begin{split}
\Delta_R  &  \approx \sum_{k=1}^K\log_2(\eta^2_{k})+\log_2(e)\Gamma\left( \frac{\Nt}{\Nt-1}\right)2^{-\frac{\Bk}{\Nt-1}} \\
 & + \log_2\left(1+ \rho_k\sum_{\substack{\l = 1 \\ \l \neq k}}^{K}\alpha_{k,\l} \left( 1-\eta_{k,\l}^2\right) +\Gamma\left( \frac{2\Nt-1}{\Nt-1}\right) \Nt\rho_k\sum_{\substack{\l = 1 \\ \l \neq k}}^{K}\alpha_{k,\l}\eta^2_{k,\l}  2^{-\frac{\Bkl}{\Nt-1}}\right), \label{eqn:lossInSumRate-approx4}
\end{split}
\end{align}
%\begin{align}
%\begin{split}
%\Delta_R  &  \approx \sum_{k=1}^K\log_2(\eta^2_{k})+ \frac{\log_2(e)}{\Nt-1}\Gamma\left( \frac{1}{\Nt-1}\right)2^{-\frac{\Bk}{\Nt-1}} \\
% & + \log_2\left(1+ \Nt\rho_k\sum_{\substack{\l = 1 \\ \l \neq k}}^{K}\alpha_{k,\l} \left( 1-\eta_{k,\l}^2 + \eta^2_{k,\l} 2^\Bkl\Gamma\left( \frac{\Nt}{\Nt-1}\right)2^{-\frac{\Bkl\Nt}{\Nt-1}}\frac{\Nt}{\Nt-1}\right)\right) \label{eqn:lossInSumRate-approx3} .
%\end{split}
%\end{align}
using the identity $N\Gamma(N) = \Gamma(N+1)$.
We denote 
\begin{align}
\begin{split}
\Delta_{R,k}(\Bk,\{\Bkl\}_\l)  \bydef &\log_2(e)\Gamma\left( \frac{\Nt}{\Nt-1}\right)2^{-\frac{\Bk}{\Nt-1}} \\
& + \log_2\left(1+ \Nt\rho_k\sum_{\substack{\l = 1 \\ \l \neq k}}^{K}\alpha_{k,\l} \left( 1-\eta_{k,\l}^2\right) +\Gamma\left( \frac{2\Nt-1}{\Nt-1}\right) \Nt\rho_k\sum_{\substack{\l = 1 \\ \l \neq k}}^{K}\alpha_{k,\l}\eta^2_{k,\l}  2^{-\frac{\Bkl}{\Nt-1}}\right)\label{eqn:lossInSumRate-approx5} .
\end{split}\end{align}

The mean loss in sum-rate can finally be written as
\begin{equation}
\Delta_R   \approx \sum_{k=1}^K\log_2\left(\eta^2_{k}\right)  + \sum_{k=1}^K\Delta_{R,k}(\Bk,\{\Bkl\}_{\substack{\l=1\\\l\neq k}}^K) .\label{eqn:lossInSumRate-approx6}
\end{equation}
For a fixed $\eta_k$, minimizing \eqref{eqn:lossInSumRate-approx6} reduces to minimizing $\sum_{k=1}^K\Delta_{R,k}(\Bk,\{\Bkl\}_{\substack{\l=1\\\l\neq k}}^K)$, which is equivalent to minimizing each $\Delta_{R,k}(\Bk,\{\Bkl\}_{\substack{\l=1\\\l\neq k}}^K)$ individually due to the independence of each term in the summation. Note that this eliminates the need for joint optimization. 

The optimization problem to minimize the $k^{\rm th}$ user's contribution to the mean loss in sum-rate, $\Delta_{R,k}(\Bk,\{\Bkl\}_{\substack{\l=1\\\l\neq k}}^K)$ (and hence, the expression in \eqref{eqn:lossInSumRate-approx5}) is given by
\begin{align}
\begin{split}
&\min_{\Bk,\{\Bkl\}_{\substack{\l=1\\\l\neq k}}^K}\Delta_{R,k}(\Bk,\{\Bkl\}_{\substack{\l=1\\\l\neq k}}^K) \\ 
\rm{s.t.}&~\Bk + \sum_{\substack{\l=1\\ \l\neq k}}^K \Bkl = \Bt, ~{\rm and}~ \Bk, \Bkl \geq 0 \label{eqn:OptProb}.
\end{split}
\end{align}
We denote the total number of bits allocated to quantize interfering channels by $\Bi = \sum_{\substack{\l=1\\ \l\neq k}}^K \Bkl$, where $\Bk+\Bi = \Bt$. Given $\Bi$, we first derive $\Bkl$ to minimize the contribution of the interfering channels towards the mean loss in sum-rate, i.e., we minimize
\begin{equation}
\log_2\left(1+\rho_k\sum_{\substack{\l = 1 \\ \l \neq k}}^{K}\alpha_{k,\l} \left( 1-\eta_{k,\l}^2\right) +\Gamma\left( \frac{2\Nt-1}{\Nt-1}\right)\rho_k \Nt\sum_{\substack{\l = 1 \\ \l \neq k}}^{K}\alpha_{k,\l}\eta^2_{k,\l}  2^{-\frac{\Bkl}{\Nt-1}}\right) \label{eqn:OptProb_IntMin-a}
\end{equation}
such that $\sum_{\substack{\l=1\\ \l\neq k}}^K \Bkl = \Bi$. Since the log function is monotonic in nature and $\{\Bkl\}_{\substack{\l=1\\\l\neq k}}^K$ are the only variables in \eqref{eqn:OptProb_IntMin-a}, the optimization problem in \eqref{eqn:OptProb_IntMin-a} is reduced to
\begin{align}
\begin{split}
&\min_{\{\Bkl\}_{\substack{\l=1\\\l\neq k}}^K}~~\sum_{\substack{\l = 1 \\ \l \neq k}}^{K}\alpha_{k,\l}\eta^2_{k,\l}  2^{-\frac{\Bkl}{\Nt-1}}\\ 
\rm{s.t.}&~\sum_{\substack{\l=1\\ \l\neq k}}^K \Bkl = \Bi, ~{\rm and}~ \Bkl \geq 0 \label{eqn:OptProb-IntMinFinal}.
\end{split}
\end{align}
The solution to \eqref{eqn:OptProb-IntMinFinal} and the contribution of each of the terms inside the summation in the objective function of \eqref{eqn:OptProb-IntMinFinal} is given by Theorem~\ref{thm:OptSolnIntChan} using the arithmetic-geometric mean inequality. We consider first unconstrained optimization, where the bits $\Bk$ and $\Bkl$ do not have to be integers. We later use convexity arguments to determine non-negative integer solutions to $\Bk$ and $\Bkl$.

\begin{theorem}\label{thm:OptSolnIntChan}
The optimum number of bits assigned to the $\l^{\rm th}$ interferer, $\Bkl^*$, that minimizes \eqref{eqn:OptProb-IntMinFinal} is given by
$$
\Bkl^* = \frac{\Bi}{|\cK|} + (\Nt-1)\log_2\left(\frac{\alpha_{k,\l}\eta^2_{k,\l} }{\prod_{\l \in\cK}(\alpha_{k,\l}\eta^2_{k,\l})^{\frac{1}{|\cK|}}}\right),
$$
for $\l\in\cK$, and $\Bkl = 0$ for $\l\notin\cK$, where $\cK$ is the largest set of interferers that satisfies
$$
\log_2\left(\frac{\prod_{\l \in\cK}(\alpha_{k,\l}\eta^2_{k,\l})^{\frac{1}{|\cK|}}}{\alpha_{k,\l}\eta^2_{k,\l}} \right) < \frac{\Bt}{|\cK|(\Nt-1)} .
$$
\end{theorem}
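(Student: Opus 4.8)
The plan is to solve the convex program \eqref{eqn:OptProb-IntMinFinal} by first handling only the equality constraint via the arithmetic-geometric mean (AM-GM) inequality, and then enforcing nonnegativity through an active-set (water-filling) argument. For brevity I would write $c_\l \bydef \alpha_{k,\l}\eta^2_{k,\l}$ and $N \bydef \Nt-1$, so the objective becomes $\sum_{\l\neq k} c_\l 2^{-\Bkl/N}$ subject to $\sum_{\l\neq k}\Bkl = \Bi$ and $\Bkl\geq 0$.

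First I would apply AM-GM to the nonnegative terms $c_\l 2^{-\Bkl/N}$ over an active set $\cK$ of interferers. The crucial observation is that their geometric mean is fixed once $\cK$ is fixed, because $\prod_{\l\in\cK} c_\l 2^{-\Bkl/N} = \left(\prod_{\l\in\cK} c_\l\right) 2^{-\frac{1}{N}\sum_{\l\in\cK}\Bkl}$ and on the active set $\sum_{\l\in\cK}\Bkl = \Bi$ is constant. Hence AM-GM yields $\sum_{\l\in\cK} c_\l 2^{-\Bkl/N} \geq |\cK|\left(\prod_{\l\in\cK} c_\l\right)^{1/|\cK|} 2^{-\Bi/(N|\cK|)}$, with equality precisely when all terms are equal. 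Setting each $c_\l 2^{-\Bkl/N}$ equal to a common value and imposing the constraint solves for that value and produces $\Bkl^* = \Bi/|\cK| + N\log_2\!\big(c_\l/\prod_{\l'\in\cK} c_{\l'}^{1/|\cK|}\big)$, which is exactly the claimed formula.

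Next I would impose $\Bkl \geq 0$. The formula can return a negative value for a weak interferer (small $c_\l$), and such an interferer should receive no bits. A short rearrangement shows $\Bkl^* \geq 0$ holds iff $\log_2\!\big(\prod_{\l'\in\cK} c_{\l'}^{1/|\cK|}/c_\l\big) \leq \Bi/(|\cK| N)$, which is precisely the inequality defining $\cK$ (up to the paper's substitution relating $\Bi$ and $\Bt$). The characterization of $\cK$ as the \emph{largest} admissible set corresponds to a greedy procedure: begin with all interferers, and while some $\Bkl^*$ is negative, drop the weakest interferer (pin its bit to zero), recompute over the survivors, and repeat. Since each $2^{-x/N}$ is convex and the feasible region is a simplex, the KKT conditions are necessary and sufficient for global optimality, and the equal-terms condition on $\cK$ together with $\Bkl=0$ off $\cK$ is exactly a KKT point.

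The main obstacle I anticipate is the nonnegativity/active-set step rather than the AM-GM computation: I must verify that the greedy removal of weak interferers terminates at the true optimizer, i.e. that once the equal-terms solution turns negative for a variable it is optimal to fix that variable at zero, and that re-solving over the reduced set never reactivates a dropped interferer. I would close this via complementary slackness — a dropped interferer carries $\Bkl=0$ with a nonnegative Lagrange multiplier, while the stated inequality certifies dual feasibility on $\cK$ — which guarantees that $\cK$ is well-defined as the largest set satisfying the inequality and that the resulting allocation is globally optimal by convexity.
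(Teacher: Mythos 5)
Your proposal is correct and follows essentially the same route as the paper's proof in Appendix~D: apply the arithmetic--geometric mean inequality to $\sum_{\l}\alpha_{k,\l}\eta^2_{k,\l}2^{-\Bkl/(\Nt-1)}$, note that the geometric mean is fixed by the constraint $\sum_\l \Bkl = \Bi$, impose the equal-terms condition to extract the closed form for $\Bkl^*$, and then zero out the interferers for which the formula goes negative, which yields the defining inequality for $\cK$. The one place you go beyond the paper is the nonnegativity step: the paper simply truncates negative allocations and restricts to the effective set without arguing that this truncate-and-resolve procedure is globally optimal, whereas your KKT/complementary-slackness closure (convex objective on a simplex, dual feasibility certified by the stated inequality, dropped interferers never reactivated) actually supplies the justification the paper leaves implicit --- a worthwhile tightening rather than a different method.
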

\begin{proof}
The proof is given in Appendix \ref{app:OptSolnIntChan}.
\end{proof}

In Theorem \ref{thm:OptSolnIntChan}, $\cK$ denotes the set of \emph{effective} set of interferers. Only the effective interfering channels are quantized using the proposed bit-partitioning algorithm. The idea is that the limited number of feedback bits are best utilized by allocating them to strong interferers, which can affect the sum-rate drastically or/and to the interferers with small delays, since nulling out outdated CSI will not improve sum-rates. Feedback bits are assigned to the channels in proportion to their signal strengths and delays. If all the interferers have the same delay and strength, then it is seen from Theorem \ref{thm:OptSolnIntChan} that all the channels are assigned the same number of feedback bits. 

Using the bit allocation strategy for interfering channels proposed in Theorem~\ref{thm:OptSolnIntChan}, the minimum value of the objective function in \eqref{eqn:OptProb-IntMinFinal} is given by
$$
\min_{\{\Bkl\}_{\substack{\l=1\\ \l\neq k}}^K}~~\sum_{\substack{\l = 1 \\ \l \neq k}}^{K}\alpha_{k,\l}\eta^2_{k,\l}  2^{-\frac{\Bkl}{\Nt-1}} = |\cK|2^{-\frac{\Bi}{|\cK|(\Nt-1)}}\prod_{\l\in\cK}(\alpha_{k,\l}\eta^2_{k,\l})^{\frac{1}{|\cK|}}.
$$
The objective function of \eqref{eqn:OptProb}, $\Delta_{R,k}(\Bk,\{\Bkl\}_{\substack{\l=1\\\l\neq k}}^K)$, can then be rewritten as
\begin{align}
\begin{split}
\Delta_{R,k}&(\Bk,\{\Bkl\}_{\substack{\l=1\\\l\neq k}}^K)  = \log_2(e)\Gamma\left( \frac{\Nt}{\Nt-1}\right)2^{-\frac{\Bk}{\Nt-1}} \\
& +\log_2\left(1 + \Nt\rho_k\sum_{\substack{\l = 1 \\ \l \neq k}}^{K}\alpha_{k,\l} \left( 1-\eta_{k,\l}^2\right)+|\cK|\rho_k \Nt\Gamma\left(\frac{2\Nt-1}{\Nt-1}\right)2^{-\frac{\Bt-\Bk}{|\cK|(\Nt-1)}}\prod_{\l\in\cK}(\alpha_{k,\l}\eta^2_{k,\l})^{\frac{1}{|\cK|}}\right) \\
& = \Delta_{R,k}(\Bk). \label{eqn:OptProb2}
\end{split}
\end{align}
Hence, the objective function of \eqref{eqn:OptProb} is now reduced to an expression in a single variable. To simplify the function in \eqref{eqn:OptProb2}, we consider different solutions for the high and low SNR regimes. 

\noindent \textbf{\emph{Case 1: Low SNR Regime}}\\
Using the approximation that $\ln(1+x) \approx x$ for $x \approx 0$, $\Delta_{R,k}(\Bk)$ can be written as
\begin{align}
\begin{split}
\Delta_{R,k}(\Bk)  \approx &\log_2(e)\Gamma\left( \frac{\Nt}{\Nt-1}\right)2^{-\frac{\Bk}{\Nt-1}} +\log_2(e)\rho_k \Nt\sum_{\substack{\l = 1 \\ \l \neq k}}^{K}\alpha_{k,\l} \left( 1-\eta_{k,\l}^2\right) \\
& +\log_2(e)|\cK| \Nt\rho_k\Gamma\left(\frac{2\Nt-1}{\Nt-1}\right)2^{-\frac{\Bt-\Bk}{|\cK|(\Nt-1)}}\prod_{\l\in\cK}(\alpha_{k,\l}\eta^2_{k,\l})^{\frac{1}{|\cK|}}. \label{eqn:OptProb3-LowSNR}
\end{split}
\end{align}
We use the arithmetic mean-geometric mean inequality to minimize the expression in \eqref{eqn:OptProb3-LowSNR}, as given in Theorem~\ref{thm:SolnLowSNR}. 

\begin{theorem}\label{thm:SolnLowSNR}
Given the total number of bits allocated to quantize \emph{all} the channels, $\Bt$, the optimum number of bits assigned to the desired channel at the $k^{\rm th}$ user, $\Bk$, to minimize \eqref{eqn:OptProb3-LowSNR} is given by
$$
\Bk = \frac{\Bt}{|\cK|+1} - \frac{(\Nt-1)|\cK|}{|\cK|+1} \log_2\left(\rho_k\frac{\Nt}{\Nt-1}\prod_{\l \in \cK}(\alpha_{k,\l}\eta^2_{k,\l})^{\frac{1}{|\cK|}}\right),                
$$
for $\Nt > K$ and $\Bk = 0$ for $\Nt = K$. The optimum number of bits assigned to all the interfering channels at the $k^{\rm th}$ user is computed as $\Bi=\Bt-\Bk$. 
%$$
%\Bi = \Bt\frac{K-1}{K} + \frac{(\Nt-1)(K-1)}{K} \log_2\left(\rho_k\frac{\Nt}{\Nt-1}\prod_{\substack{\l = 1\\ \l \neq k}}^{K}(\alpha_{k,\l}\eta^2_{k,\l})^{\frac{1}{K-1}}\right). 
%$$
\end{theorem}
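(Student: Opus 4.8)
The plan is to minimize the single-variable objective $\Delta_{R,k}(\Bk)$ of \eqref{eqn:OptProb3-LowSNR} directly. In the low-SNR regime the logarithm has already been linearized, so the middle term $\log_2(e)\Nt\rho_k\sum_{\l\neq k}\alpha_{k,\l}(1-\eta_{k,\l}^2)$ is a constant with respect to $\Bk$ and plays no role. After substituting the budget constraint $\Bi=\Bt-\Bk$, the only $\Bk$-dependence sits in two exponential terms, and I would write the relevant part of the objective as
\[
g(\Bk)=C_1\,2^{-\frac{\Bk}{\Nt-1}}+|\cK|\,C_2\,2^{-\frac{\Bt-\Bk}{|\cK|(\Nt-1)}},
\]
with $C_1=\log_2(e)\Gamma\!\left(\frac{\Nt}{\Nt-1}\right)$ and $C_2$ collecting the interference coefficient $\log_2(e)\rho_k\Gamma\!\left(\frac{2\Nt-1}{\Nt-1}\right)\prod_{\l\in\cK}(\alpha_{k,\l}\eta^2_{k,\l})^{1/|\cK|}$. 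Each summand is convex in $\Bk$, so $g$ is convex and its unconstrained minimizer is its unique stationary point.

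Next I would apply the arithmetic-geometric mean inequality, as the theorem suggests. I treat $g$ as a sum of $|\cK|+1$ positive terms — one copy of $u\bydef C_1\,2^{-\Bk/(\Nt-1)}$ together with $|\cK|$ identical copies of $v\bydef C_2\,2^{-(\Bt-\Bk)/(|\cK|(\Nt-1))}$ — and invoke AM-GM:
\[
g(\Bk)=u+|\cK|\,v\;\geq\;(|\cK|+1)\left(u\,v^{|\cK|}\right)^{\frac{1}{|\cK|+1}}.
\]
The decisive observation is that $u\,v^{|\cK|}$ does not depend on $\Bk$: the exponents combine as $-\frac{\Bk}{\Nt-1}-\frac{\Bt-\Bk}{\Nt-1}=-\frac{\Bt}{\Nt-1}$, so the bit budget forces the geometric mean to collapse to a constant. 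Hence the right-hand side is a fixed lower bound, attained exactly when all $|\cK|+1$ terms coincide, i.e. when $u=v$. This pins down the minimizer without any calculus.

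Solving the equality condition $u=v$ by taking base-2 logarithms gives a linear equation in $\Bk$ whose solution is $\Bk=\frac{\Bt}{|\cK|+1}+\frac{(\Nt-1)|\cK|}{|\cK|+1}\log_2\frac{C_1}{C_2}$. I would then simplify the constant using the Gamma recursion $\Gamma(x+1)=x\Gamma(x)$ at $x=\frac{\Nt}{\Nt-1}$, which yields $\Gamma\!\left(\frac{2\Nt-1}{\Nt-1}\right)=\frac{\Nt}{\Nt-1}\Gamma\!\left(\frac{\Nt}{\Nt-1}\right)$; this makes the $\Gamma$ factors in $C_2/C_1$ cancel down to $\rho_k\frac{\Nt}{\Nt-1}\prod_{\l\in\cK}(\alpha_{k,\l}\eta^2_{k,\l})^{1/|\cK|}$, and flipping $\log_2(C_1/C_2)=-\log_2(C_2/C_1)$ produces the minus sign in the statement. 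For the degenerate case $\Nt=K$ I would not optimize at all: it was established in Section~\ref{sec:BD_LF} that the nulling beamformer then lies in a one-dimensional subspace independent of $\hhk$, so feeding back the desired direction is useless, giving $\Bk=0$ and $\Bi=\Bt$.

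The main obstacle is the exponent-cancellation step: one must split the interference term into \emph{exactly} $|\cK|$ equal pieces, since that is precisely what makes $u\,v^{|\cK|}$ constant under the constraint and renders the AM-GM bound tight; any other grouping leaves a residual $\Bk$-dependence in the geometric mean and fails to locate the minimizer. A secondary, purely bookkeeping hazard is tracking the $\Gamma$ and $\Nt/(\Nt-1)$ factors carefully so that $C_2/C_1$ reduces to precisely the stated argument of the logarithm. I would also flag that this is the \emph{unconstrained}, real-valued minimizer; enforcing $\Bk\geq0$ and integrality is left to the subsequent convexity and rounding argument and need not be addressed here.
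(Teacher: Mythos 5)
Your proof is correct and reaches the theorem's formula by the same basic device the paper uses---the arithmetic--geometric mean inequality applied to the two-exponential objective under the bit budget---but your version is actually tighter than the paper's own Appendix~E. The paper applies a \emph{two}-term AM--GM, $u + C_i w \ge 2\sqrt{u\,C_i w}$, whose right-hand side still depends on $\Bk$ (the exponents $-\Bk/(\Nt-1)$ and $+\Bk/((\Nt-1)(|\cK|))$ do not cancel for $|\cK|>1$), so setting $u = C_i w$ does not rigorously locate the minimizer; indeed, solving that equality literally produces an extra factor of $|\cK|$ inside the logarithm that is silently dropped in the paper's final line. Your $(|\cK|+1)$-term split---one copy of the desired-channel term plus $|\cK|$ identical copies of the interference term---is precisely the grouping that makes the geometric mean collapse to the constant $2^{-\Bt/(\Nt-1)}$ factor, so the lower bound is a genuine constant attained at $u=v$, and one can check that $u=v$ coincides with the calculus stationary point of the convex objective. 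This repairs the rigor gap and still lands on the stated formula after the $\Gamma(x+1)=x\Gamma(x)$ cancellation. The one bookkeeping caveat: your $C_2$ omits the factor $\Nt$ that appears in the interference coefficient of \eqref{eqn:OptProb3-LowSNR}; this matches the paper's own Appendix~E (which also drops that $\Nt$ when passing to \eqref{eqn:OptProb3-LowSNR-2}) and hence reproduces the theorem as stated, but if one carries the $\Nt$ from \eqref{eqn:OptProb3-LowSNR} consistently, an extra $\log_2 \Nt$ term appears---an inconsistency internal to the paper rather than a flaw in your argument. Your handling of the $\Nt = K$ degenerate case and the deferral of nonnegativity and integer rounding also match the paper.
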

\begin{proof}
The proof is given in Appendix \ref{app:SolnLowSNR}.
\end{proof}

From Theorem~\ref{thm:SolnLowSNR}, it is seen that as $\rho_k\prod_{\l \in \cK}\alpha_{k,\l}\eta^2_{k,\l}$ increases (or as interferers become stronger/ have smaller delays) $\Bk$ is reduced. In the presence of strong interference, reducing the interfering signal power will increase the sum-rate more than further improving the already strong desired signal strength. In contrast, when the interfering signals are weak, i.e. $\rho_k\prod_{\l \in \cK}\alpha_{k,\l}\eta^2_{k,\l}$ is small, most bits will be assigned to quantize the desired channel more finely in an attempt to improve the desired signal power. Hence, the theorem makes intuitive sense. 

\noindent \textbf{\emph{Case 2: High SNR Regime}}\\
In the high SNR regime, we use the approximation that $\ln(1+x) \approx \ln x$ for $x >> 1$ to approximate $\Delta_{R,k}(\Bk)$ in \eqref{eqn:OptProb2} by 
\begin{align}
\begin{split}
\Delta_{R,k}(\Bk)  \approx &\log_2(e)\Gamma\left( \frac{\Nt}{\Nt-1}\right)2^{-\frac{\Bk}{\Nt-1}} +\log_2\left(\rho_k\sum_{\substack{\l = 1 \\ \l \neq k}}^{K}\alpha_{k,\l} \left( 1-\eta_{k,\l}^2\right)\right. \\
& \left.+\rho_k\Gamma\left( \frac{2\Nt-1}{\Nt-1}\right)2^{-\frac{\Bt-\Bk}{(\Nt-1)(K-1)}}\prod_{\l = 1}^{K}(\alpha_{k,\l}\eta^2_{k,\l})^{\frac{1}{K-1}}\right) . \label{eqn:OptProb3-HighSNR}
\end{split}
\end{align}
The optimal number of bits to minimize \eqref{eqn:OptProb3-HighSNR} are given in Theorem~\ref{thm:SolnHighSNR}.

\begin{theorem}\label{thm:SolnHighSNR}
Given the total number of bits allocated to quantize \emph{all} the channels, $\Bt$, the optimum number of bits assigned to the desired channel at the $k^{\rm th}$ user, $\Bk$, is given by
$$
\Bk = (\Nt -1) \log_2\left( (|\cK|-1)\Gamma\left(\frac{\Nt}{\Nt-1}\right)\right). 
$$
The optimum number of bits assigned to all the interfering channels at the $k^{\rm th}$ user, $\Bi = \Bt  - \Bk$.
\end{theorem}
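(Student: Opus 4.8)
The plan is to collapse the high-SNR objective \eqref{eqn:OptProb3-HighSNR} to a function of the single variable $\Bk$, with the remaining budget absorbed exactly as $\Bi=\Bt-\Bk$ in \eqref{eqn:OptProb2}, and then minimize by elementary calculus. The first move is to simplify the logarithmic (interference) term. At high SNR every summand inside the logarithm is amplified by $\rho_k$, so by the same logic that justified $\ln(1+x)\approx\ln x$ I would argue that the quantization-error contribution $2^{-\frac{\Bt-\Bk}{(\Nt-1)(|\cK|-1)}}$ dominates the correlation-floor contribution $\sum_{\l\neq k}\alpha_{k,\l}(1-\eta_{k,\l}^2)$; this domination is exact in the small-delay regime $\eta_{k,\l}\to 1$ in which the Gauss--Markov model is justified. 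Dropping the floor term reduces the logarithm to that of a single exponential in $\Bk$, which is \emph{affine} in $\Bk$: a constant minus $\frac{\Bt-\Bk}{(\Nt-1)(|\cK|-1)}$.

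After this reduction the objective has the shape of a convex, strictly decreasing exponential $\log_2(e)\,\Gamma\!\left(\frac{\Nt}{\Nt-1}\right)2^{-\frac{\Bk}{\Nt-1}}$ from the desired channel plus an increasing affine term from the interference. Since a convex plus affine function is convex, the first-order stationary point is the unique global minimizer over $\bbR$, which is precisely the convexity argument the text anticipates for later recovering non-negative integer allocations. I would therefore differentiate in $\Bk$ and set the derivative to zero. The key simplification is that the exponential's derivative carries a factor $\log_2(e)\,\ln 2=1$, so the $\Bt$-, $\rho_k$-, and channel-weight dependence of the slope cancels, and the stationary condition reduces to the clean balance $\Gamma\!\left(\frac{\Nt}{\Nt-1}\right)2^{-\frac{\Bk}{\Nt-1}}=\frac{1}{|\cK|-1}$. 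Taking $\log_2$ of $2^{\frac{\Bk}{\Nt-1}}=(|\cK|-1)\Gamma\!\left(\frac{\Nt}{\Nt-1}\right)$ yields the claimed $\Bk$, and the interfering bits follow from $\Bi=\Bt-\Bk$ distributed via Theorem~\ref{thm:OptSolnIntChan}.

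Two points deserve care. First, because the stationary $\Bk$ is \emph{independent} of $\Bt$, $\rho_k$, and the weights (the high-SNR regime saturates the desired-channel allocation and routes surplus bits to interference), feasibility must be checked: if the computed value exceeds $\Bt$ or is negative, the constrained optimum lies on the boundary, and convexity guarantees that clamping the unconstrained solution to $[0,\Bt]$ is optimal. Second, and this is the main obstacle, the derivation hinges on correctly tracking the multiplicative and combinatorial constants through the high-SNR approximations: the identity $\Gamma\!\left(\frac{2\Nt-1}{\Nt-1}\right)=\frac{\Nt}{\Nt-1}\Gamma\!\left(\frac{\Nt}{\Nt-1}\right)$, the interferer-count factor left behind when the optimal allocation of Theorem~\ref{thm:OptSolnIntChan} is substituted, and the slope $\frac{1}{(\Nt-1)(|\cK|-1)}$ of the linearized interference term all funnel into the single constant inside the final logarithm. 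Pinning down that constant, rather than the exponential shape (which is immediate), is where the genuine bookkeeping lies.
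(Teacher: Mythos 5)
Your proposal is correct and follows essentially the same route as the paper's Appendix F: the paper factors the correlation-floor term out of the logarithm and invokes ``large $C_i$'' to linearize $\log_2\bigl(1+C_i2^{\frac{\Bk}{(\Nt-1)(K-1)}}\bigr)$ in $\Bk$, which is the same approximation as your dropping the floor term before taking the logarithm, and both then set the derivative of the resulting convex (exponential-plus-affine) function to zero, with the $\log_2(e)\ln 2=1$ cancellation producing the identical stationary condition $\Gamma\!\left(\frac{\Nt}{\Nt-1}\right)2^{-\frac{\Bk}{\Nt-1}}=\frac{1}{|\cK|-1}$. Your added feasibility/clamping remark is sound extra care that the paper relegates to its general convexity discussion.
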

\begin{proof}
The proof is given in Appendix \ref{app:SolnHighSNR}.
\end{proof}

It is seen from Theorem~\ref{thm:SolnHighSNR} that $\Bk$ and $\Bi$ are independent of the SNR, in the high SNR regime. Further, $\Bk$ is also independent of the received signal strengths and delays of the interferers. This makes intuitive sense if we take into account that the sum-rate saturates as SNR $\rightarrow\infty$. Hence, minimizing the mean loss in sum-rate at high SNR will be independent of the SNR of the desired (and interfering) channels.   

Note that the feedback bit assignments in Theorems~\ref{thm:OptSolnIntChan}, \ref{thm:SolnLowSNR}, and \ref{thm:SolnHighSNR} are not necessarily integers. To ensure that $\Bk$ and $\Bkl$ are positive integers, we only have to check for the ceiling and floor of $\Bk$ and $\Bkl$ due to the convexity of the objective functions in \eqref{eqn:OptProb3-LowSNR} and \eqref{eqn:OptProb-IntMinFinal}, respectively. In \eqref{eqn:OptProb3-LowSNR}, the objective function is the sum of two non-negatively weighted convex functions, $2^{-\frac{\Bk}{\Nt-1}}$ and $2^{\frac{\Bk}{|\cK|(\Nt-1)}}$ and hence, is convex. By the same argument, \eqref{eqn:OptProb-IntMinFinal} is also convex. This is similar to the approach in \cite{Bhagavat2009}. 

The approximate mean loss in sum-rate in \eqref{eqn:OptProb3-HighSNR} can be rewritten as 
\begin{align}
\begin{split}
\log_2(e)\Gamma\left( \frac{\Nt}{\Nt-1}\right)2^{-\frac{\Bk}{\Nt-1}} +&\log_2\left(1 + \frac{\Gamma\left( \frac{2\Nt-1}{\Nt-1}\right)2^{\frac{-\Bt}{(\Nt-1)(K-1)}}\prod_{\l = 1}^{K}(\alpha_{k,\l}\eta^2_{k,\l})^{\frac{1}{K-1}}}{\sum_{\substack{\l = 1 \\ \l \neq k}}^{K}\alpha_{k,\l} \left( 1-\eta_{k,\l}^2\right)}2^{\frac{\Bk}{(\Nt-1)(K-1)}}\right) \\
& - \log_2\left(\rho_k\sum_{\substack{\l = 1 \\ \l \neq k}}^{K}\alpha_{k,\l} \left( 1-\eta_{k,\l}^2\right)\right). \label{eqn:OptProb3-HighSNR-2}
\end{split}
\end{align}
The last term can be ignored in the optimization problem since it is independent of $\Bk$. Denoting the ratio before $2^{\frac{\Bk}{(\Nt-1)(K-1)}}$ inside the logarithm by $C_i$, the objective function in \eqref{eqn:OptProb3-HighSNR-2} can be simplified as
\begin{align}
\begin{split}
\log_2(e)\Gamma\left( \frac{\Nt}{\Nt-1}\right)2^{-\frac{\Bk}{\Nt-1}} +\log_2\left(1 +C_i2^{\frac{\Bk}{(\Nt-1)(K-1)}}\right) . \label{eqn:OptProb3-HighSNR-3}
\end{split}
\end{align}
Note that $C_i$ will typically be large since the denominator of $\sum_{\substack{\l = 1 \\ \l \neq k}}^{K}\alpha_{k,\l} \left( 1-\eta_{k,\l}^2\right)$ is very small.

%%%%%%%%%%%%%%%%%%%%%%%%%%%%%%%%%%%%%%%%%%%%%%%%%%%
\section{Simulation Results} \label{sec:Results}
In this section, we present simulation results to demonstrate the gain in the mean per-cell data rate obtained using the proposed adaptive feedback-bit partitioning algorithm, over equal-bit allocation. We consider a seven-cell system with a single active user per cell. Each base station has eight antennas ($\Nt = 8$) and each user has a single antenna. For simulation purposes we focus on a target user located in the cell at the center of the seven-cell grid, which receives the desired signal from its own base station and interference from the six neighboring cells as shown in Fig.~\ref{fig:SimsModel} and from base stations out of the seven-cell system. The mobile terminal is assumed to travel in a straight line from the cell center to the cell edge with a velocity $v$. The distance between the user and the desired base station is given by $d\leq R$. The system setup is based on the urban microcell propagation scenario in the 3GPP spatial channel model \cite{3GPP2003}. The radius of each cell, R, is assumed to be $400~m$. The path loss between the base stations and the mobile user is modeled using the COST 231 Walfish-Ikegami NLOS model \cite{3GPP2003}, adopted for urban mircocells. Using a carrier frequency of $1.9~GHz$, base station antenna height of $12.5~m$, mobile terminal height of $1.5~m$, building height $12~m$, building to building distance $50~m$ and street width $25~m$, the path-loss in $dB$, $PL[dB]$, is given by \cite{3GPP2003}
MS antenna height 1.5m 
\begin{equation}
PL[dB] = 34.53 + 38\log_{10}(d) \label{eqn:SimsPLModel}
\end{equation}
The transmit power, $E_s = 3~dBW$ for all the base stations and the noise power is given by $-144~dBW$. We also model the delay associated with the feedback of CSI to the desired base station by one symbol time and that with the exchange of CSI over the backhaul link connecting the desired and interfering base stations by two symbol times. The parameters used for simulations are tabulated in Table~\ref{tab:simParams}.

In Fig.~\ref{fig:dataRateVsDist}, we compare the performance of equi-bit partitioning and the proposed adaptive strategy as a function of the distance from the desired base station. In the figure, we normalize the data rates from the two limited feedback techniques by the full CSI data rate for $\Bt = 7$ and $\Bt = 35$ with $v=10~mph$. It is seen that the feedback-bit assignment presented in this paper outperforms equal-bit allocation irrespective of where the user is located. At the cell-edge, adaptive bit assignment achieves about $45\%$ more of the full CSI data rate as compared to uniform allocation.

The bit partitioning corresponding to $\Bt = 35$ are plotted in Figs.~\ref{fig:bitPartBtot35HighSNR} and \ref{fig:bitPartBtot35LowSNR} as a function of distance for the simulation setup described in Fig.~\ref{fig:SimsModel} and Table~\ref{tab:simParams}. We consider both the high SNR and low SNR partitions by setting $E_s = 3~dB$ and $-3~dB$, respectively. The channel from base station $j$ to the user `0' (as shown in Fig.~\ref{fig:SimsModel}) is quantized using $B_{0,j}$ bits. In Fig.~\ref{fig:bitPartBtot35HighSNR}, it is seen that more bits are assigned to quantize the desired channel towards the cell edge as compared to the low SNR case in Fig.~\ref{fig:bitPartBtot35LowSNR}. This makes intuitive sense because at high SNR, the quantization error associated with the desired channel is larger than that at low SNR. Hence, more bits need to be assigned to quantize the desired channel at high SNR. Another interesting point from Figs.~\ref{fig:bitPartBtot35HighSNR} and \ref{fig:bitPartBtot35LowSNR} is the decrease in the number of bits allocated to the interfering channels from base stations 1, 4, 5 and 6 as the user moves to the cell edge towards cells 2 and 3. This clearly illustrates the adaptive nature of the proposed algorithm, which allocates bits as a function of the received signal strength and the delays.

%In Fig.~\ref{fig:dataRateVsVel}, we plot the normalized data rates of the two strategies at the cell-edge as a function of the velocity of the user, $v\in[1,40]~mph$, for $\Bt = 7,35$. From the figure, while the proposed adaptive bit partitioning outperforms the simplistic equal-bit approach for all the range of velocities, the difference is especially significant at higher velocities. This is because the data rate at full CSI reduces drastically as compared to the limited feedback cases, and hence, inc
%

We plot the cell-edge data rates using both, equi-bit allocation and the proposed adaptive partitioning, in Fig.~\ref{fig:dataRateVsBtot} for the system setup described in this section. Note that at the cell edge, the strongest interference is from $BTS~2$ and $BTS~3$, as shown in Fig.~\ref{fig:SimsModel}. The corresponding bit assignments are also given in the figure, for each $\Bt$ in the format of $(B_0,B_{0,1},B_{0,2},B_{0,3},B_{0,4},B_{0,5},B_{0,6})$. Finally, the effect of delay in the backhaul link is plotted in Fig.~\ref{fig:delay} for a user at cell-edge with $\Bt = 7$ and $\Bt = 35$. We fix the feedback delay, $\Dk$ to one symbol time and vary the backhaul delay between $[0,5]$ symbol times, i.e. $\Dkl\in[1,6]$. Similar to Fig.~\ref{fig:dataRateVsBtot}, it is seen that while the proposed algorithm outperforms equal bit allocation for all the delays and $\Bt$ values. 

It is seen from both Fig.~\ref{fig:dataRateVsBtot} and Fig.~\ref{fig:delay} that while the limited feedback technique in this paper outperforms EBA for all $\Bt$, the improvement in data rate is larger for higher $\Bt$. The new strategy yields about $40~\%$ higher sum-rates than the equal-bit approach at $\Bt = 35$ in both Fig.~\ref{fig:dataRateVsBtot} and Fig.~\ref{fig:delay}. This can be explained as follows. At $\Bt = 7$, the desired channel is given all the 7 bits, implying 0 bits for each of the strong interfering channels. For $\Bt = 35$, the strong interfering channels are assigned $14$ bits for each of two strong interfering channels. EBA, in contrast, sees an increase in the feedback bits for the strong interfering channels from $1$ to $5$ bits per channel. Quantizing the strong interferers more finely at the cost of allocating zero bits to the weak channels leads to the significant improvement in data rate using the proposed algorithm. This leads to a better quantization of the strong interfering channels, leading to a significantly better performance. We can also see from Fig.~\ref{fig:dataRateVsBtot} that all the bits are assigned to the desired channel at $\Bt = 7$. The reason is that when the received signal strength for the desired signal is low, the proposed algorithm will first concentrate on improving the desired signal strength through finer quantization, rather than reducing the strength of the interfering signals, which can only be as strong as the desired signal.

%%%%%%%%%%%%%%%%%%%%%%%%%%%%%%%%%%%%%%%%%%%%%%%%%%%
\section{Conclusion} \label{sec:Conclusion}
We considered a multicell cooperative MISO system using ICIN. We assumed a single active user per cell, which estimates and feeds back the desired and interfering CSI to its own base station. Backhaul links are used to ensure that each base station has knowledge of the interference that it is causing to neighboring cells. The feedback and backhaul links are assumed to have delays associated with them. In this paper, we quantified the mean loss in sum-rate due to delayed limited feedback of the desired and interfering CSI, using ICIN. We derived a closed-form expression for allocating feedback bits to quantize the multiple channels. We showed, using simulations, that the proposed algorithm yields higher mean sum-rates as compared to the equal-bit partitioning approach. 

%%%%%%%%%%%%%%%%%%%%%%%%%%%%%%%%%%%%%%%%%%%%%%%%%%%

\appendices

\section{Proof of Proposition~\ref{lem:meanDesChan}}\label{app:meanDesChan}
To evaluate the sum-rate in \eqref{eqn:Rs_LFD}, the relation between $\hk[n]$ and $\hfk[n]$ needs to be known. Since the $k^\th$ base station uses the delayed quantized desired channel, $\hhk[n-\Dk]$, to design $\hfk[n]$, we first determine the dependence of $\hk[n]$ and $\hhk[n-\Dk]$. While the Gauss-Markov model in \eqref{eqn:GM_hk} is used to relate $\hk[n]$ and $\hk[n-\Dk]$, the relationship between $\thk[n-\Dk]$ and $\hhk[n-\Dk]$ is given by\footnote{In the remainder of this proof, we refer to $\cos(\theta_{\thk[n-\Dk], \hhk[n-\Dk]})$ as $\cos(\theta_{\thk, \hhk})$, and $\sin(\theta_{\thk[n-\Dk], \hhk[n-\Dk]})$ as $\sin(\theta_{\thk, \hhk})$ for the sake of notational brevity.}\cite{Jindal2006}
\begin{equation}
\thk[n-\Dk] = \cos(\theta_{\thk, \hhk}) \hhk[n-\Dk] + \sin(\theta_{\thk, \hhk}) \sk[n], \label{eqn:RelnTildeHat-h}
\end{equation}
where $\sk[n]$ is an isotropically distributed vector in the null-space of $\hhk[n-\Dk]$, and is independent of $\cos(\theta_{\thk, \hhk})$ (or $\sin(\theta_{\thk, \hhk})$). Substituting \eqref{eqn:RelnTildeHat-h} in \eqref{eqn:GM_hk} yields
\begin{equation} \small
\hk[n] = \eta_{k}\|\hk[n-\Dk]\|\left(\cos(\theta_{\thk, \hhk}) \hhk[n-\Dk] + \sin(\theta_{ \thk, \hhk})\sk[n]  \right) + \sqrt{1-\eta_{k}^2} \wk[n]. \label{eqn:htimeCorrApp}
\end{equation} 
Computing $|\hk^T[n]\hfk[n]|^2$, we have 
\begin{eqnarray} 
|\hk^T[n]\hfk[n]|^2 & = & \Big|\eta_{k}\|\hk[n-\Dk]\|\left(\cos(\theta_{\thk, \hhk}) \hhk[n-\Dk]\hfk[n]   
+\sin(\theta_{ \thk, \hhk})\sk[n] \hfk[n] \right) \nonumber \\  
& &+ \sqrt{1-\eta_{k}^2}\wk[n]\hfk[n] \Big|^2. \nonumber%\label{eqn:evalHF_1} 
\end{eqnarray} 
%Using the reverse Triangle inequality, for any real numbers $a,b,c$, it can be shown that  
%\begin{eqnarray}
%|a+b+c|^2 &\geq& \Big||a|-|b+c| \Big|^2 \label{eqn:TriIneq1}\\ 
%&=& (|a|-|b+c|)^2\label{eqn:TriIneq2}\\ 
%&\geq& (|a|-|b| - |c|)^2 \label{eqn:TriIneq3}\\ 
%&\geq& |a|^2 \label{eqn:TriIneq4} 
%\end{eqnarray} 
%where \eqref{eqn:TriIneq3} is obtained using the Triangle inequality that $|b+c| \leq |b| + |c|$. Also,  \eqref{eqn:TriIneq4} uses the fact that $|b|,|c|\geq0$. 
Since $\sin(\theta_{\thk, \hhk})$ and $\sqrt{1-\eta_{k}^2}~(<< 1)$ are generally very small, we can approximate $|\hk^T[n]\hfk[n]|^2$ by 
\begin{equation} 
|\hk^T[n]\hfk[n]|^2  \approx \eta^2_{k}\|\hk[n-\Dk]\|^2\cos^2(\theta_{\thk, \hhk})\Big| \hhk[n-\Dk]\hfk[n] \Big|^2 .\label{eqn:evalHF_2} 
\end{equation} 
Hence, $\bbE\{\log_2(\rho_k|\hk^T[n]\hfk[n]|^2)\}$ can be written as
\begin{align}
\begin{split}
\bbE&\{\log_2(|\hk^T[n]\hfk[n]|^2)\}  \geq \bbE\left\{\log_2\left(\rho_k\eta^2_{k}\|\hk[n-\Dk]\|^2\cos^2(\theta_{\thk, \hhk})\Big| \hhk[n-\Dk]\hfk[n] \Big|^2\right)\right\} \\%\label{eqn:evalHF_3} \\
%&= \log_2(\eta^2_{k})+ \bbE\left\{\log_2\left(\cos^2(\theta_{\thk, \hhk})\right)\right\} + \bbE\left\{\log_2\left(\|\hk[n-\Dk]\|^2\Big| \hhk[n-\Dk]\hfk[n] \Big|^2\right)\right\} \\%\label{eqn:evalHF_4} \\
&= \log_2(\eta^2_{k})+ \log_2(e)\sum_{i=0}^{2^\Bk}{2^\Bk \choose i} (-1)^i \sum_{n=1}^{i(\Nt-1)} \frac{1}{n}  + \bbE\left\{\log_2\left(\|\hk[n-\Dk]\|^2\Big| \hhk[n-\Dk]\hfk[n] \Big|^2\right)\right\},\label{eqn:evalHF_5} 
\end{split}
\end{align}
where \eqref{eqn:evalHF_5}  is obtained using a result from \cite{Bhagavat2009}, that $
\bbE\left\{\ln\left(\cos^2(\theta_{\thk, \hhk})\right)\right\} =\sum_{i=0}^{N}{N \choose i} (-1)^i \sum_{n=1}^{i(\Nt-1)} \frac{1}{n} $.
%Hence, $\bbE\{\log_2(\rho_k|\hk^T[n]\hfk[n]|^2)\}$ can be upper-bounded by 
%\begin{eqnarray} 
%\bbE\{\log_2(\rho_k|\hk^T[n]\hfk[n]|^2)\}  &\geq &\log_2(\rho_k\eta^2_{k})+ \log_2(e)\sum_{i=0}^{2^\Bk}{2^\Bk \choose i} (-1)^i \sum_{n=1}^{i(\Nt-1)} \frac{1}{n}  \nonumber\\
%& & + \bbE\left\{\log_2\left(\|\hk[n-\Dk]\|^2\Big| \hhk[n-\Dk]\hfk[n] \Big|^2\right)\right\}.\label{eqn:evalHF_6} 
%\end{eqnarray} 

\section{Proof of Lemma~\ref{lem:sumToBetaSum}}\label{app:sumToBetaSum}
Denoting $\Nt-1 = M$, $\sum_{i=1}^{2^\Bk}{2^\Bk \choose i} (-1)^i \sum_{n=1}^{iM} \frac{1}{n}$ can be written as
\begin{equation}
\sum_{i=1}^{2^\Bk}{2^\Bk \choose i} (-1)^i \sum_{n=1}^{iM} \frac{1}{n} = \sum_{i=1}^{2^\Bk}{2^\Bk \choose i} (-1)^i \int_0^1 \frac{1-x^{iM}}{1-x} dx. \label{eqn:sumToBetaSum_1}
\end{equation}
Since $\frac{1-x^{iM}}{1-x} \geq 0$, $i=1,\ldots,2^\Bk$ and $x\in[0,1]$, we exchange the integrand and summations to yield
\begin{eqnarray}
\sum_{i=1}^{2^\Bk}{2^\Bk \choose i} (-1)^i \sum_{n=1}^{iM} \frac{1}{n} &=& \int_0^1\frac{1}{1-x} \sum_{i=1}^{2^\Bk}{2^\Bk \choose i} (-1)^i  (1-x^{iM}) dx \label{eqn:sumToBetaSum_2} \\
&=& \int_0^1\frac{1}{1-x} \sum_{i=1}^{2^\Bk}\left[{2^\Bk \choose i} (-1)^i  - {2^\Bk \choose i} (-x^M)^{i}\right]dx \label{eqn:sumToBetaSum_3} \\
&=& -\int_0^1\frac{(1-x^M)^{2^\Bk}}{1-x}  dx \label{eqn:sumToBetaSum_4} ,
\end{eqnarray}
where \eqref{eqn:sumToBetaSum_4} was obtained using $\sum_{i=1}^{2^\Bk}{2^\Bk \choose i} (-1)^i = (1+(-1))^{2^\Bk} = 0$ and $\sum_{i=1}^{2^\Bk}{2^\Bk \choose i}(-x^M)^{i} = (1-x^M)^{2^\Bk}$. The relation $1-x^M = (1-x)\sum_{k=0}^{M-1}x^k$ is used to further simplify \eqref{eqn:sumToBetaSum_4} as
\begin{eqnarray}
\sum_{i=1}^{2^\Bk}{2^\Bk \choose i} (-1)^i \sum_{n=1}^{iM} \frac{1}{n} 
%&=& -\int_0^1(1-x^M)^{2^\Bk - 1} \frac{(1-x^M)}{1-x}  dx \label{eqn:sumToBetaSum_5} \\
&=& -\int_0^1(1-x^M)^{2^\Bk - 1}\sum_{k=0}^{M-1}x^k  dx \label{eqn:sumToBetaSum_6} .
\end{eqnarray}
Since $x^k \geq 0$ for all $k=1,\ldots,M-1$ and $x\in[0,1]$, we again interchange the summation and integrals
\begin{eqnarray}
\sum_{i=1}^{2^\Bk}{2^\Bk \choose i} (-1)^i \sum_{n=1}^{iM} \frac{1}{n} &=& -\sum_{k=0}^{M-1}\int_0^1(1-x^M)^{2^\Bk - 1}x^k  dx \label{eqn:sumToBetaSum_7}\\ 
&=& -\sum_{k=0}^{M-1}\int_0^1u^{\frac{k+1}{M}-1}(1-u)^{2^\Bk - 1}  du \label{eqn:sumToBetaSum_8} ,
\end{eqnarray}
where $u = x^M$. Note that the expression in \eqref{eqn:sumToBetaSum_8} is that of a beta function with parameters given by $\frac{k+1}{M}$ and $2^\Bk$. Hence, \eqref{eqn:sumToBetaSum_8} is given by the sum of beta functions
\begin{eqnarray}
\sum_{i=1}^{2^\Bk}{2^\Bk \choose i} (-1)^i \sum_{n=1}^{iM} \frac{1}{n} &=& -\sum_{k=1}^{\Nt-1}\beta\left(2^\Bk,\frac{k}{\Nt-1} \right) \label{eqn:sumToBetaSum_8} .
\end{eqnarray}

\section{Proof of Proposition~\ref{lem:meanIntChan}}\label{app:meanIntChan}
Similar to \eqref{eqn:htimeCorrApp}, we can express $\gkl[n]$ as a function of $\hgkl[n-\Dkl]$ as
%From \eqref{eqn:GM_gkl}, the relation between $\gkl[n]$ and $\gkl[n-\Dkl]$ is given by
%\begin{equation}
%\gkl[n] = \eta_{k,\l}\|\gkl[n-\Dkl]\|\tgkl[n-\Dkl] + \sqrt{1-\eta_{k,\l}^2}\wkl[n]. \label{eqn:timeCorrApp}
%\end{equation}
%Now, the $\l^\th$ base station uses $\hgkl[n-\Dkl], k=1,\ldots,K, k\neq \l$ to design $\hfl[n]$. The relationship between $\tgkl[n-\Dkl]$ and $\hgkl[n-\Dkl]$ is given by\footnote{In the remainder of this proof, we refer to $\cos(\theta_{\tgkl[n-\Dkl], \hgkl[n-\Dkl]})$ as $\cos(\theta_{\tgkl, \hgkl})$, and $\sin(\theta_{\tgkl[n-\Dkl], \hgkl[n-\Dkl]})$ as $\sin(\theta_{\tgkl, \hgkl})$ for the sake of notational brevity.}\cite{Jindal2006}
%\begin{equation}
%\tgkl[n-\Dkl] = \cos(\theta_{\tgkl, \hgkl}) \hgkl[n-\Dkl] + \sin(\theta_{\tgkl, \hgkl}) \skl[n], \label{eqn:RelnTildeHat}
%\end{equation}
%where $\skl[n]$ is an isotropically distributed vector in the null-space of $\hgkl[n-\Dkl]$, and is independent of $\cos(\theta_{\tgkl, \hgkl})$ (or $\sin(\theta_{\tgkl, \hgkl})$). Substituting \eqref{eqn:RelnTildeHat} in \eqref{eqn:timeCorrApp} yields
\begin{equation} \small
\gkl[n] = \eta_{k,\l}\|\gkl[n-\Dkl]\|\left(\cos(\theta_{\tgkl, \hgkl}) \hgkl[n-\Dkl] + \sin(\theta_{\tgkl, \hgkl})\skl[n]  \right) + \sqrt{1-\eta_{k,\l}^2}\wkl[n]. \label{eqn:timeCorrApp}
\end{equation}
Since $\hfl[n]$ lies in the null-space of $\hgkl[n-\Dkl]$, we have
\begin{equation} 
|\gkl^T[n]\hfl[n]|^2  =  \Big|\eta_{k,\l}\|\gkl[n-\Dkl]\|\sin(\theta_{\tgkl, \hgkl})\skl[n] \hfl[n] + \sqrt{1-\eta_{k,\l}^2}\wkl[n]\hfl[n] \Big|^2 \nonumber .%\label{eqn:evalGF_2} \\
\end{equation}
Using the triangle inequality, we get
\begin{align}
\begin{split}
|\gkl^T[n]&\hfl[n]|^2 \leq \left(\Big|\eta_{k,\l}\|\gkl[n-\Dkl]\|\sin(\theta_{\tgkl, \hgkl})\skl[n] \hfl[n]\Big| + \Big|\sqrt{1-\eta_{k,\l}^2}\wkl[n]\hfl[n] \Big|\right)^2 \\%\label{eqn:evalGF_3} \\
& = \eta^2_{k,\l} \|\gkl[n-\Dkl]\|^2 \sin^2(\theta_{\tgkl, \hgkl}) \Big|\skl[n] \hfl[n]\Big|^2 + (1-\eta_{k,\l}^2) \|\wkl[n-\Dkl]\|^2\Big|\twkl[n] \hfl[n]\Big|^2 \\
& + 2 \eta_{k,\l}\|\gkl[n-\Dkl]\|\sin(\theta_{\tgkl, \hgkl})\sqrt{1-\eta_{k,\l}^2}\|\wkl[n-\Dkl]\|\Big|\skl[n] \hfl[n]\Big|.\Big|\wkl[n]\hfl[n]\Big|. \label{eqn:evalGF_5}
\end{split}
\end{align}
%As $\Big|\skl[n] \hfl[n]\Big| \leq 1$ and $\Big|\wkl[n]\hfl[n]\Big| \leq 1$, we can further upper-bound the expression in \eqref{eqn:evalGF_5} as
%\begin{align}
%\begin{split}
%|\gkl&^T[n]\hfl[n]|^2 \leq \eta^2_{k,\l} \|\gkl[n-\Dkl]\|^2 \sin^2(\theta_{\tgkl, \hgkl}) \Big|\skl[n] \hfl[n]\Big|^2  \\&+ (1-\eta_{k,\l}^2) \|\wkl[n-\Dkl]\|^2\Big|\wkl[n] \hfl[n]\Big|^2 + 2 \eta_{k,\l}\|\gkl[n-\Dkl]\|\sin(\theta_{\tgkl, \hgkl})\sqrt{1-\eta_{k,\l}^2}\|\twkl[n-\Dkl]\| .\label{eqn:evalGF_6}
%\end{split}
%\end{align}
Evaluating $\bbE\{|\gkl^T[n]\hfl[n]|^2\}$ using the relations that $\Big|\skl[n] \hfl[n]\Big| \leq 1$ and $\Big|\wkl[n]\hfl[n]\Big| \leq 1$, 
\begin{align}
\begin{split}
\bbE\{|\gkl^T[n]\hfl[n]|^2\} & \leq \eta^2_{k,\l}\bbE\{\|\gkl[n-\Dkl]\|^2\} \bbE\{\sin^2(\theta_{\tgkl, \hgkl})\} \bbE\{\Big|\skl[n] \hfl[n]\Big|^2\} \\
& + (1-\eta_{k,\l}^2) \bbE\{\|\wkl[n-\Dkl]\|^2\}\bbE\{\Big|\twkl[n] \hfl[n]\Big|^2\} \\
& + 2 \eta_{k,\l}\sqrt{1-\eta_{k,\l}^2}\bbE\{\sin(\theta_{\tgkl, \hgkl})\}\bbE\{\|\gkl[n-\Dkl]\|\}\bbE\{\|\wkl[n-\Dkl]\|\}. \label{eqn:evalGF_7} 
\end{split}
\end{align}
Now, $\bbE\{\|\gkl[n-\Dkl]\|^2\} =  \bbE\{\|\wkl[n-\Dkl]\|^2\} = \Nt$. Further, since $\skl[n]$ and $\hfl[n]$ are both isotropically and independently distributed in the $\Nt - 1$ null-space of $\hgkl[n-\Dkl]$, $\Big|\skl[n] \hfl[n]\Big|^2$ is beta distributed as $\beta(1,\Nt-2)$ \cite{Jindal2006}. Similarly, $\twkl[n]$ and and $\hfl[n]$ are both isotropically and independently distributed in the $\Nt$ dimensions, implying that $\Big|\twkl[n] \hfl[n]\Big|^2$ is beta distributed as $\beta(1,\Nt-1)$. The mean of a beta distribution $\beta(a,b)$ is $\frac{a}{a+b}$. Since $\|\gkl[n-\Dkl]\|$ and $\|\wkl[n-\Dkl]\|$ are $\chi$ distributed with $2\Nt$ degrees of freedom, $\bbE\{\|\gkl[n-\Dkl]\|\} = \bbE\{\|\wkl[n-\Dkl]\|\} = \frac{\Gamma(\Nt + 1/2)}{\sqrt{2}\Gamma{\Nt}}$. Using these results, \eqref{eqn:evalGF_7} is rewritten as
\begin{eqnarray}
\bbE\{|\gkl^T[n]\hfl[n]|^2\}  & \leq & \eta^2_{k,\l} 2^\Bkl\beta\left(2^\Bkl,\frac{\Nt}{\Nt-1}\right)\frac{\Nt}{\Nt-1} + (1-\eta_{k,\l}^2) \nonumber \\
& &  + \eta_{k,\l}\sqrt{1-\eta_{k,\l}^2}\bbE\{\sin(\theta_{\tgkl, \hgkl})\} \left(\frac{\Gamma(\Nt + 1/2)}{\Gamma{\Nt}} \right)^2 . \label{eqn:evalGF_8}
\end{eqnarray}
The last term in the upper bound of $\bbE\left\{|\gkl^T[n]\hfl[n]|^2\right\}$ in \eqref{eqn:evalGF_8} includes the product of $\eta_{k,\l} (<1)$, for non-zero delay,  $\sqrt{1-\eta_{k,\l}^2} (<< 1)$ and $\bbE\{\sin(\theta_{\tgkl, \hgkl})\} (\leq 1)$, and hence, can be ignored. The lower-bound in \eqref{eqn:evalGF_8} can then be rewritten as
\begin{equation}
\bbE\{|\gkl^T[n]\hfl[n]|^2\}  \leq \log_2\left(1 +\rho_k\sum_{\substack{\l = 1 \\ \l \neq k}}^{K}\alpha_{k,\l} \left( 1-\eta_{k,\l}^2 +\eta^2_{k,\l} 2^\Bkl\beta\left(2^\Bkl,\frac{\Nt}{\Nt-1}\right)\frac{\Nt}{\Nt-1}\right)\right) . \label{eqn:UppBndT_int}
\end{equation}

\section{Proof of Theorem~\ref{app:OptSolnIntChan}}\label{app:OptSolnIntChan}
Applying the arithmetic mean - geometric mean inequality to the objective function in \eqref{eqn:OptProb-IntMinFinal}, we get 
\begin{equation}
\sum_{\substack{\l = 1 \\ \l \neq k}}^{K}\alpha_{k,\ell}\eta^2_{k,\ell}2^{-\frac{\Bkl}{\Nt-1}} \geq (K-1)\left(\prod_{\substack{\l = 1 \\\l \neq k}}^{K}\alpha_{k,\ell}\eta^2_{k,\ell}2^{-\frac{\Bkl}{\Nt-1}}\right)^{\frac{1}{K-1}}, \label{eqn:AM-GM-Bkl}
\end{equation}
where the equality holds for $\alpha_{k,\ell}\eta^2_{k,\ell}2^{-\frac{\Bkl}{\Nt-1}} = \alpha_{k,n}\eta^2_{k,n}2^{-\frac{B_{k,n}}{\Nt-1}},\l\neq n$. The minimum value of \eqref{eqn:AM-GM-Bkl} is 
\begin{equation}
(K-1)\alpha_{k,\ell}\eta^2_{k,\ell}2^{-\frac{\Bkl}{\Nt-1}} = (K-1)\left(2^{\frac{-\Bi}{\Nt-1}}\prod_{\substack{\l = 1 \\ \l \neq k}}^{K}\alpha_{k,\ell}\eta^2_{k,\ell}\right)^{\frac{1}{K-1}}.\label{eqn:GM-Bkl-2}
\end{equation}
Solving for $\Bkl$ gives us
\begin{equation}
\Bkl^* = \frac{\Bi}{K-1} + (\Nt-1)\log_2\left(\frac{\alpha_{k,\l}\eta^2_{k,\l} }{\prod_{\l = 1}^{K}(\alpha_{k,\l}\eta^2_{k,\l})^{\frac{1}{K-1}}}\right) .\label{eqn:Bkl}
\end{equation}
Note that \eqref{eqn:Bkl} can be negative when $\alpha_{k,\l}\eta^2_{k,\l}$ is sufficiently small. Since $\Bkl$ can only be non-negative, we set the $\Bkl = 0$ for all $\l$ which cause \eqref{eqn:Bkl} to become negative. This implies that we partition bits among only the effective set of interferers $\cK$ that will result in a non-negative value of $\Bkl$ for all $\l\in\cK$. 

\section{Proof of Theorem~\ref{thm:SolnLowSNR}}\label{app:SolnLowSNR}
By ignoring the terms in \eqref{eqn:OptProb3-LowSNR} that do not depend on $\Bk$, the new objective function is given by 
\begin{align}
\begin{split}
\log_2(e)\Gamma\left( \frac{\Nt}{\Nt-1}\right)2^{-\frac{\Bk}{\Nt-1}} +\log_2(e)\rho_k(K-1)\Gamma\left( \frac{2\Nt-1}{\Nt-1}\right)2^{-\frac{\Bt-\Bk}{(\Nt-1)(K-1)}}\prod_{\substack{\l = 1\\ \l \neq k}}^{K}(\alpha_{k,\l}\eta^2_{k,\l})^{\frac{1}{K-1}} . \label{eqn:OptProb3-LowSNR-2}
\end{split}
\end{align}
The expression in \eqref{eqn:OptProb3-LowSNR-2} can be further simplified by dividing the expression by the constant $\log_2(e)\Gamma\left( \frac{\Nt}{\Nt-1}\right)$ and by denoting $C_i = \rho_k\frac{(K-1)\Nt}{\Nt-1}2^{\frac{-\Bt}{(\Nt-1)(K-1)}}\prod_{\substack{\l = 1\\ \l \neq k}}^{K}(\alpha_{k,\l}\eta^2_{k,\l})^{\frac{1}{K-1}}$ as
\begin{align}
\begin{split}
2^{-\frac{\Bk}{\Nt-1}} +C_i2^{\frac{\Bk}{(\Nt-1)(K-1)}} . \label{eqn:OptProb3-LowSNR-3}
\end{split}
\end{align}
Using the arithmetic mean-geometric mean inequality, we get
\begin{equation}
2^{-\frac{\Bk}{\Nt-1}} +C_i2^{\frac{\Bk}{(\Nt-1)(K-1)}} \geq 2 \sqrt{2^{-\frac{\Bk}{\Nt-1}}\cdot C_i2^{\frac{\Bk}{(\Nt-1)(K-1)}} }.
\end{equation}
We can minimize \eqref{eqn:OptProb3-LowSNR-2} (and hence, \eqref{eqn:OptProb3-LowSNR}) by solving for $\Bk$ that satisfies the equality $2^{-\frac{\Bk}{\Nt-1}} = C_i2^{\frac{\Bk}{(\Nt-1)(K-1)}}$. Hence, the approximate mean loss in sum-rate at low SNR can be minimized by setting $\Bk$ as  
$$
\Bk^{\rm LS} = \frac{\Bt}{K} - \frac{(\Nt-1)(K-1)}{K} \log_2\left(\rho_k\frac{\Nt}{\Nt-1}\prod_{\substack{\l = 1\\ \l \neq k}}^{K}(\alpha_{k,\l}\eta^2_{k,\l})^{\frac{1}{K-1}}\right) .
$$

\section{Proof of Theorem~\ref{thm:SolnHighSNR}}\label{app:SolnHighSNR}
The approximate mean loss in sum-rate in \eqref{eqn:OptProb3-HighSNR} can be rewritten as 
\begin{align}
\begin{split}
\log_2(e)\Gamma\left( \frac{\Nt}{\Nt-1}\right)2^{-\frac{\Bk}{\Nt-1}} +&\log_2\left(1 + \frac{\Gamma\left( \frac{2\Nt-1}{\Nt-1}\right)2^{\frac{-\Bt}{(\Nt-1)(K-1)}}\prod_{\l = 1}^{K}(\alpha_{k,\l}\eta^2_{k,\l})^{\frac{1}{K-1}}}{\sum_{\substack{\l = 1 \\ \l \neq k}}^{K}\alpha_{k,\l} \left( 1-\eta_{k,\l}^2\right)}2^{\frac{\Bk}{(\Nt-1)(K-1)}}\right) \\
& - \log_2\left(\rho_k\sum_{\substack{\l = 1 \\ \l \neq k}}^{K}\alpha_{k,\l} \left( 1-\eta_{k,\l}^2\right)\right). \label{eqn:OptProb3-HighSNR-2}
\end{split}
\end{align}
The last term can be ignored in the optimization problem since it is independent of $\Bk$. Denoting the ratio before $2^{\frac{\Bk}{(\Nt-1)(K-1)}}$ inside the logarithm by $C_i$, the objective function in \eqref{eqn:OptProb3-HighSNR-2} can be simplified as
\begin{align}
\begin{split}
\log_2(e)\Gamma\left( \frac{\Nt}{\Nt-1}\right)2^{-\frac{\Bk}{\Nt-1}} +\log_2\left(1 +C_i2^{\frac{\Bk}{(\Nt-1)(K-1)}}\right) . \label{eqn:OptProb3-HighSNR-3}
\end{split}
\end{align}
Note that $C_i$ will typically be large since the denominator of $\sum_{\substack{\l = 1 \\ \l \neq k}}^{K}\alpha_{k,\l} \left( 1-\eta_{k,\l}^2\right)$ is very small. For large $C_i$, $\log_2\left(1 +C_i2^{\frac{\Bk}{(\Nt-1) (K-1)}}\right)\approx \log_2(C_i) +\frac{\Bk}{(\Nt-1)(K-1)}$, which is a linear (and hence, convex) function in $\Bk$. Now, $2^{-\frac{\Bk}{\Nt-1}}$ is a convex functions in $\Bk$. As the sum of non-negatively weighted convex functions is convex, the expression in \eqref{eqn:OptProb3-HighSNR-3} is convex. Hence, there exists a global minimizer, $\Bk \in [0,\Bt]$. Finding the derivative of the objective function in \eqref{eqn:OptProb3-HighSNR-3} with respect to $\Bk$, we get 
\begin{align}
\begin{split}
\frac{-1}{\Nt-1}\Gamma\left( \frac{\Nt}{\Nt-1}\right)2^{-\frac{\Bk}{\Nt-1}} + \frac{C_i 2^{\frac{\Bk}{(\Nt-1)(K-1)}}}{1+C_i2^{\frac{\Bk}{(\Nt-1)(K-1)}}} \frac{1}{(\Nt-1)(K-1)}. \label{eqn:OptProb3-HighSNR-4}
\end{split}
\end{align}
In \eqref{eqn:OptProb3-HighSNR-4}, we can approximate $\frac{C_i 2^{\frac{\Bk}{(\Nt-1)(K-1)}}}{1+C_i2^{\frac{\Bk}{(\Nt-1)(K-1)}}} \approx 1$, for large $C_i$. Setting \eqref{eqn:OptProb3-HighSNR-4} to zero, we get, 
$$
\Bk^{\rm HS} = (\Nt -1) \log_2\left( (K-1)\Gamma\left(\frac{\Nt}{\Nt-1}\right)\right) .
$$

%%%%%%%%%%%%%%%%%%%%%%%%%%%%%%%%%%%%%%%%%%%%%%%%%%%
\bibliographystyle{IEEEtran}
\bibliography{MulticellChanInv_Delay}
%\newpage
%%%%%%%%%%%%%%%%%%%%%%%%%%%%%%%%%%%%%%%%%%%%%%%%%%%

\begin{table}[ht] 
\caption{Simulation parameters, based on the 3GPP LTE's urban microcell.} \vspace{-10pt}
\centering      
\begin{tabular}{| l | c |}  
\hline                       
\textbf{Parameter} & \textbf{Value} \\ [0.5ex] \hline                    
Number of interferers & 6  \\   \hline
Carrier frequency, $f_c$ & $1.9~GHz$ \\ \hline
Base station height & $12.5~m$ \\ \hline
Mobile terminal height  & $1.5~m$\\ \hline
Cell radius & $400~m$ \\ \hline
Transmit power, $E_s$ & $3~dB$ \\ \hline
Noise power, $N_o$ & $-144~dB$ \\ \hline
Normalized feedback and quantization delay & 1 \\ \hline
Normalized backhaul delay & 1 \\  [1ex]  \hline     
\end{tabular} 
\label{tab:simParams}   
\end{table}

\centering{\large{\textbf{Figures}}\vspace{-10pt}
\begin{figure} [h!]
  \centering
  \includegraphics[width=3.75in]{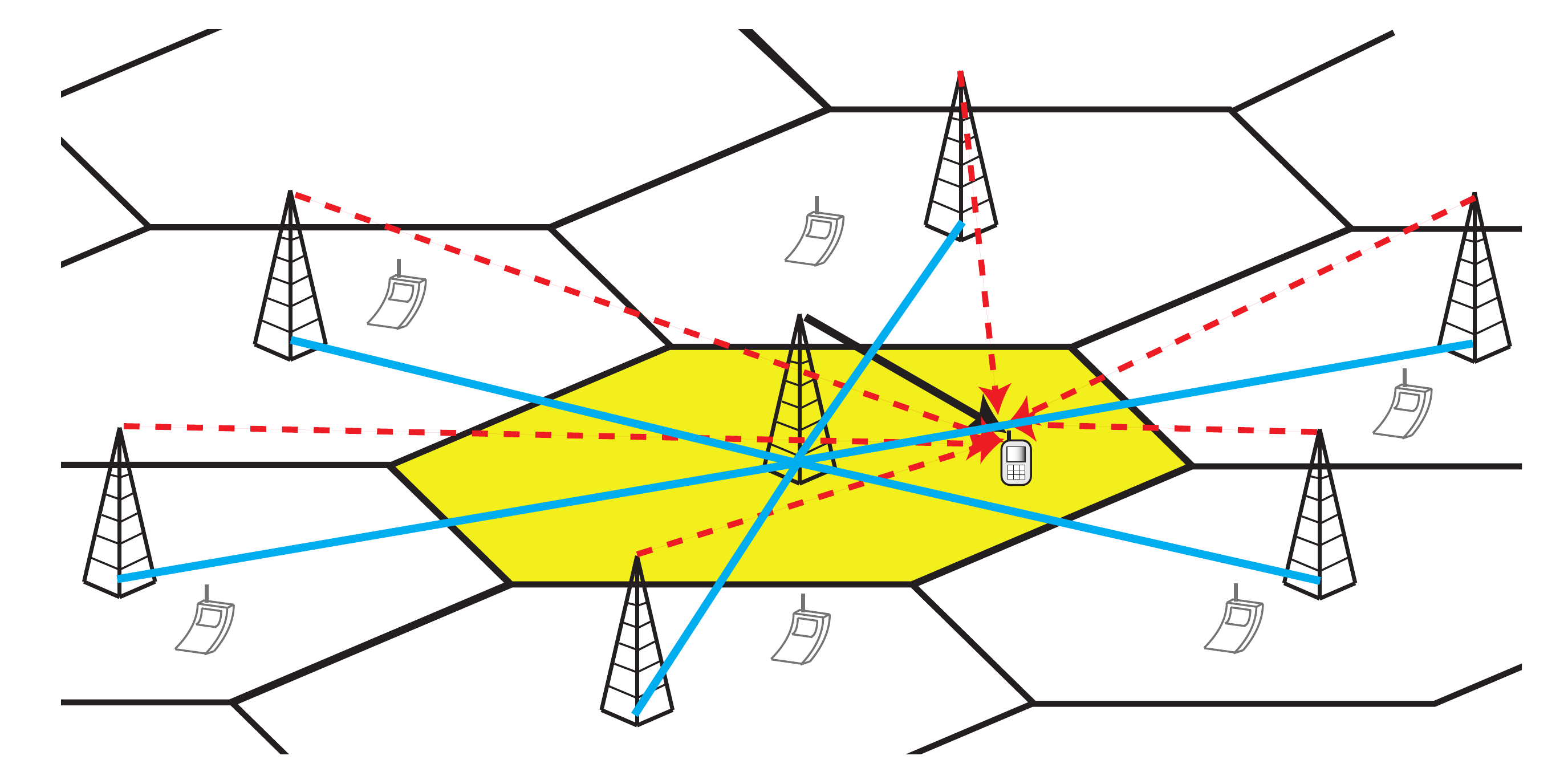}\vspace{-10pt}
  \caption{Multicell cooperative model described in Section~\ref{sec:SysModel}. The solid line with an arrow represents the   desired signal, while the dashed lines denote the interfering signals. The solid line between base stations represents the backhaul links connecting base stations.}\label{fig:SystemModel}\vspace{-10pt}
\end{figure}

\begin{figure} [h!]
  \centering
  \includegraphics[width=6.25in]{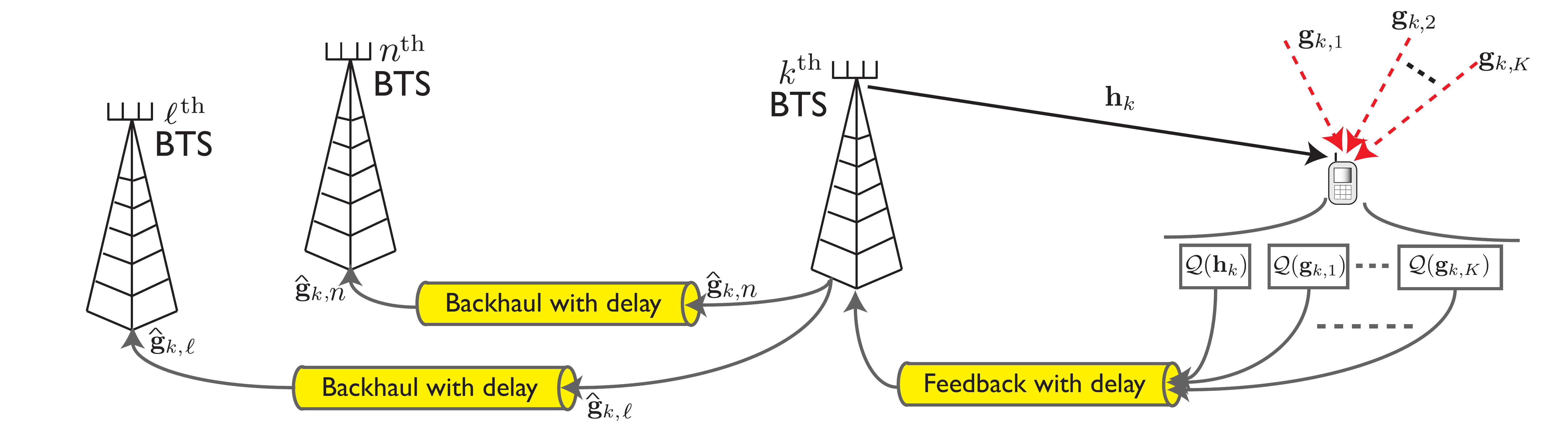}\vspace{-10pt}
  \caption{The limited feedback model, described in Section~\ref{sec:LFModel}, to feedback quantized CSI of the desired and interfering channels. The quantizing operation is denoted by $\cQ$.}\label{fig:LFModel}\vspace{-10pt}
\end{figure}

\begin{figure} [h!]
  \centering
  \includegraphics[width=4.25in]{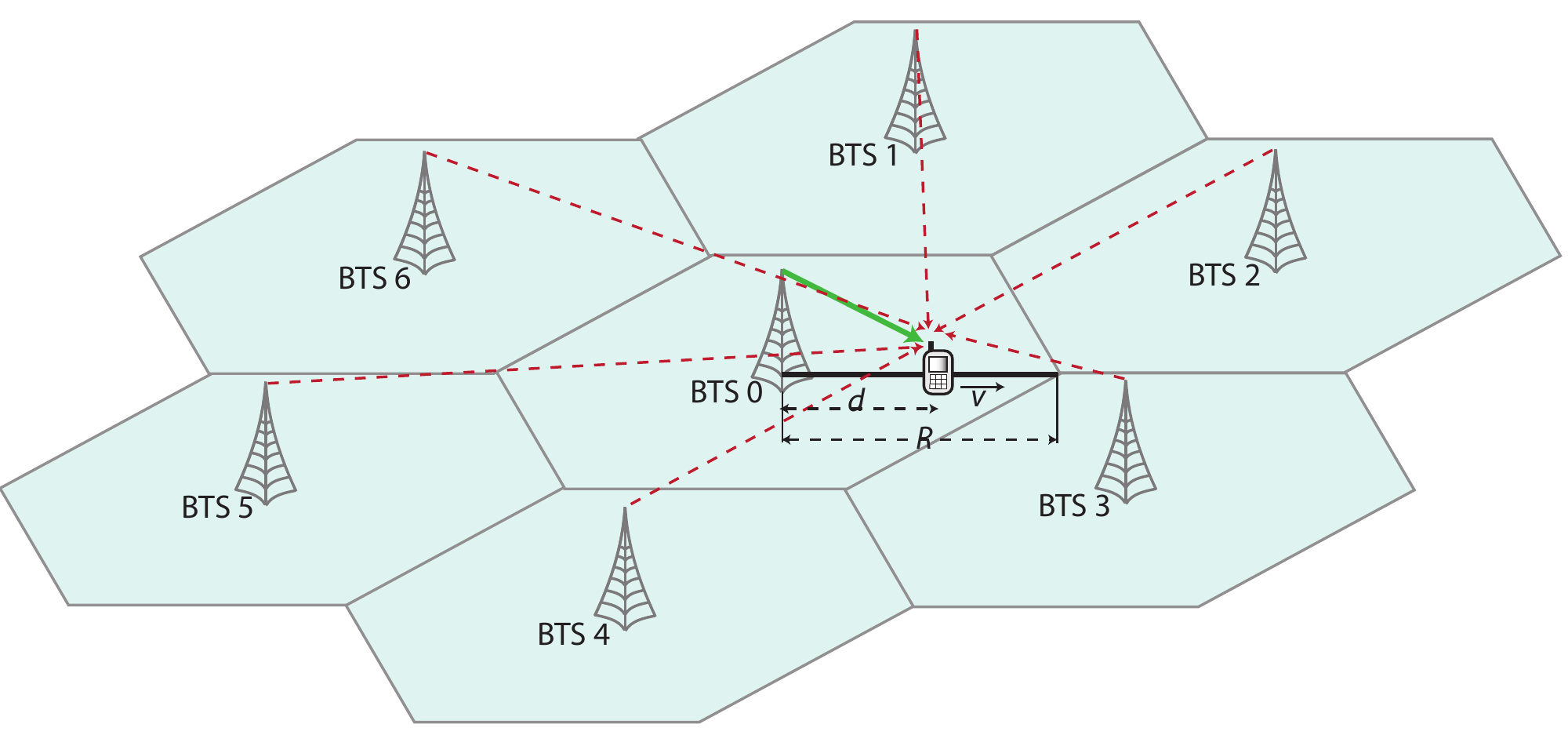}\vspace{-10pt}
  \caption{The simulation setup in Section~\ref{sec:Results}. The base stations are numbered in a clockwise fashion.}\label{fig:SimsModel}\vspace{-10pt}
\end{figure}

\begin{figure} [h!]
  \centering
  \includegraphics[width=3.25in]{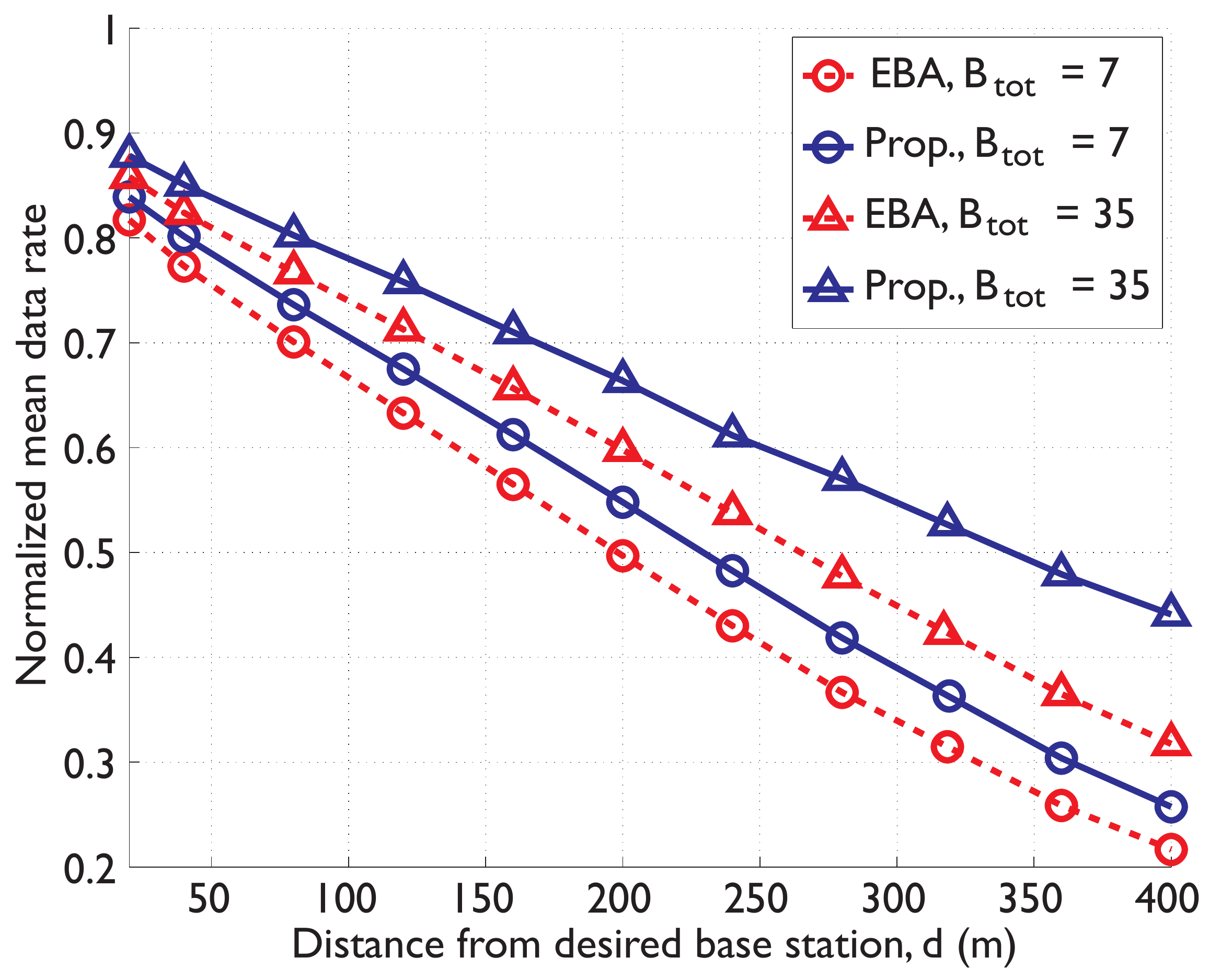}\vspace{-10pt}
  \caption{Comparison of the normalized data-rate as a function of the distance, $d$, of the user from the desired base station for $\Bt = 7, 35$ and $v=10~mph$. }\label{fig:dataRateVsDist}\vspace{-10pt}
\end{figure}

\begin{figure}[h!]
  \centering
  \subfigure[High SNR partitioning, for $E_s = 3~dB$.]{\label{fig:bitPartBtot35HighSNR}\includegraphics[width=3.0in]{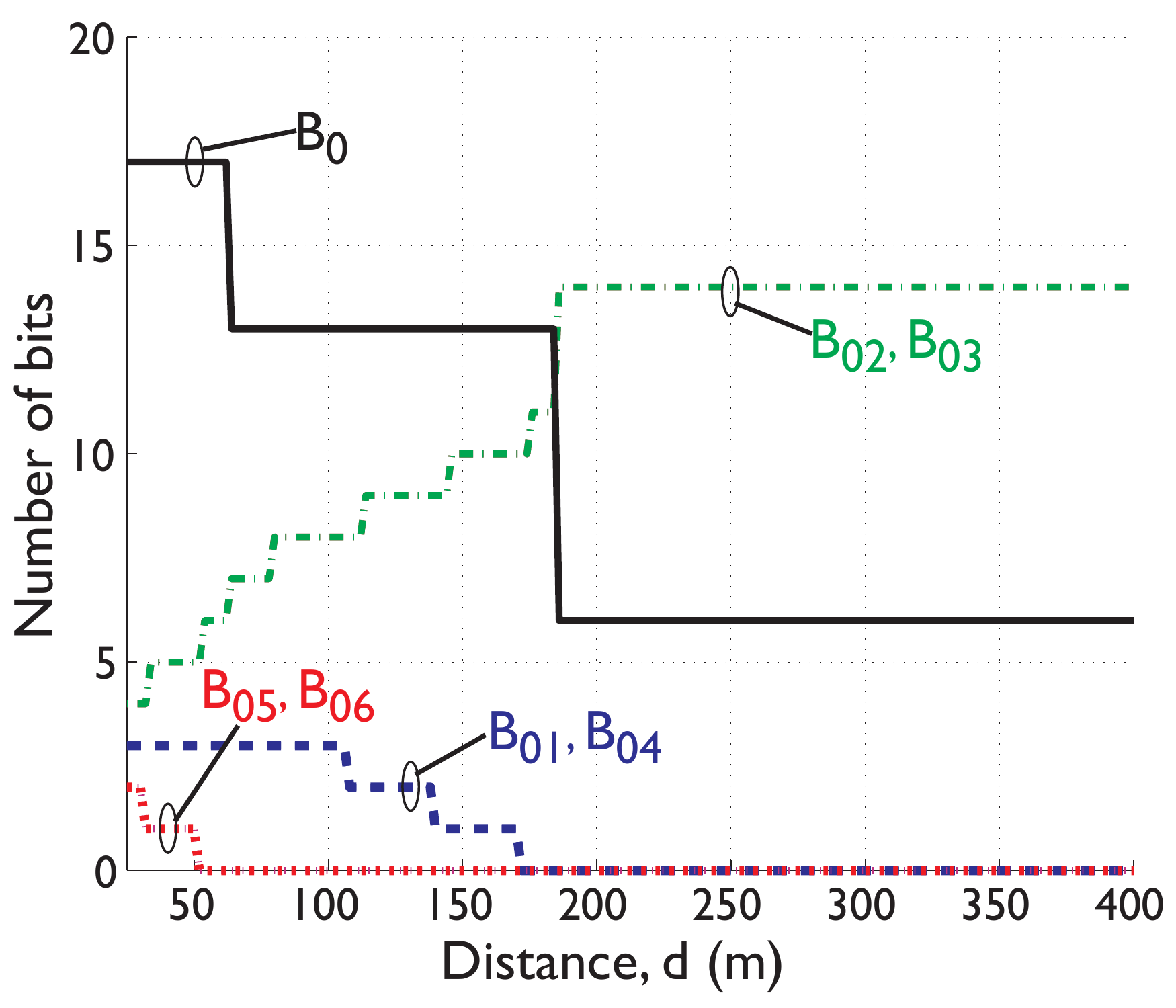}}                \vspace{-5pt}
  \subfigure[Low SNR partitioning, for $E_s = -3~dB$.]{\label{fig:bitPartBtot35LowSNR}\includegraphics[width=3.0in]{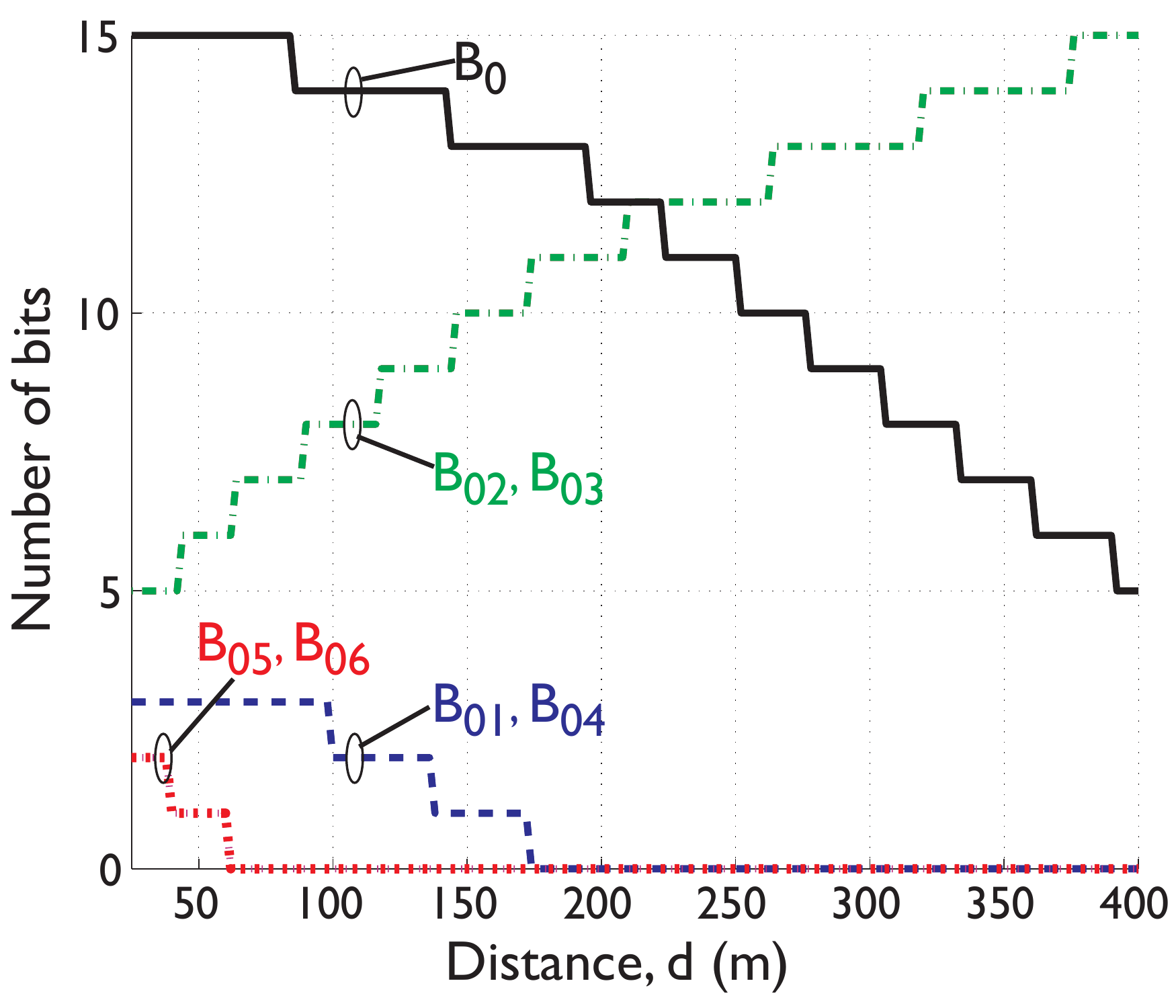}}\vspace{-5pt}
  \caption{Adaptive feedback-bit partitioning as a function of the distance, $d$, the user from the desired base station for $\Bt = 35$ and $v=10~mph$ for (a) $E_s = 3 dB$ and (b) $E_s = -3 dB$ .}
  \label{fig:bitPart}
\end{figure}

\begin{figure} [h!]
  \centering
  \includegraphics[width=4.75in]{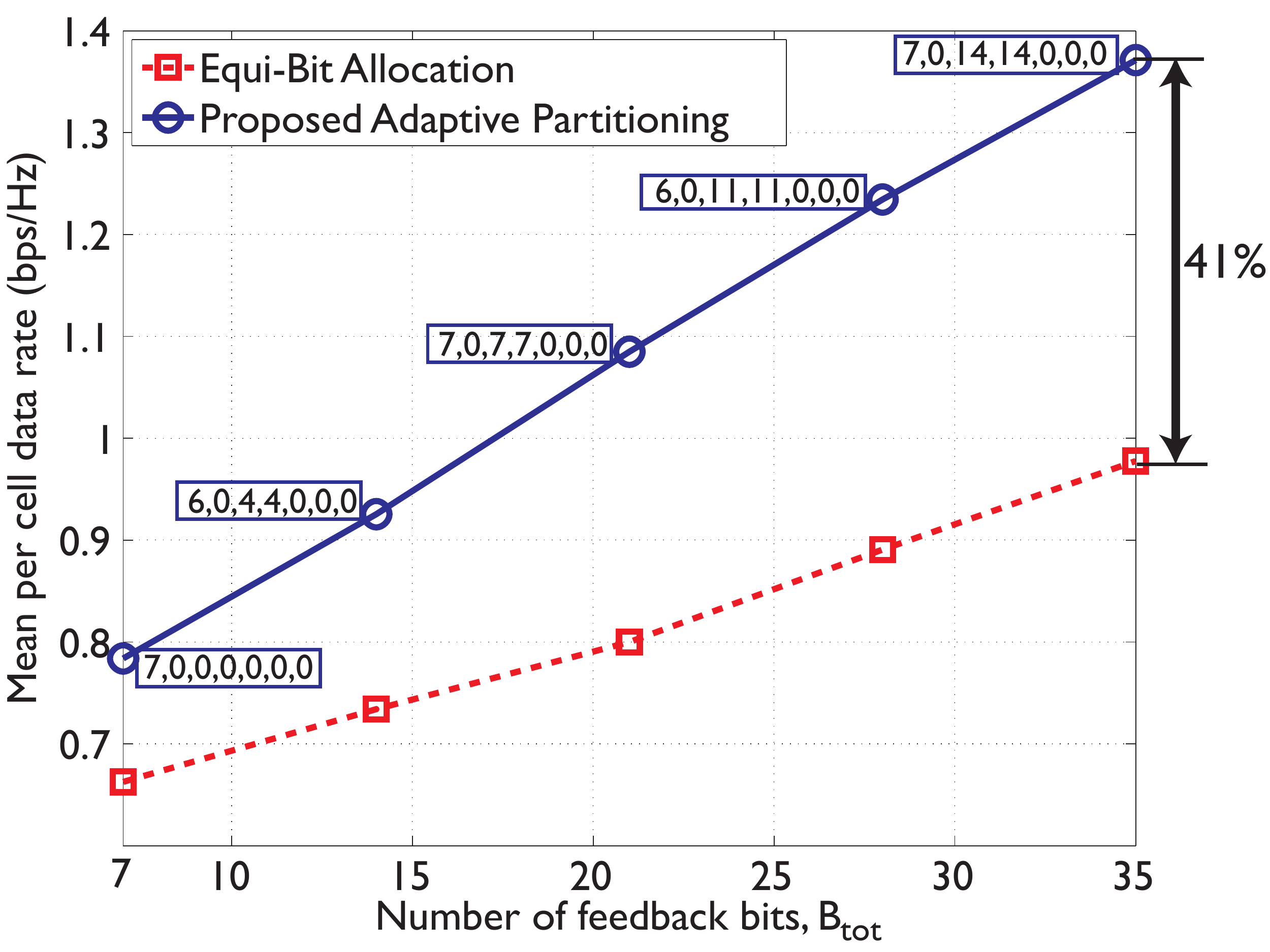}
  \caption{Comparison of the mean data-rate at the cell-edge for different values of $\Bt$. Bit assignments are shown corresponding to each $\Bt$.}\label{fig:dataRateVsBtot}
\end{figure}

\begin{figure} [h!]
  \centering
  \includegraphics[width=4.75in]{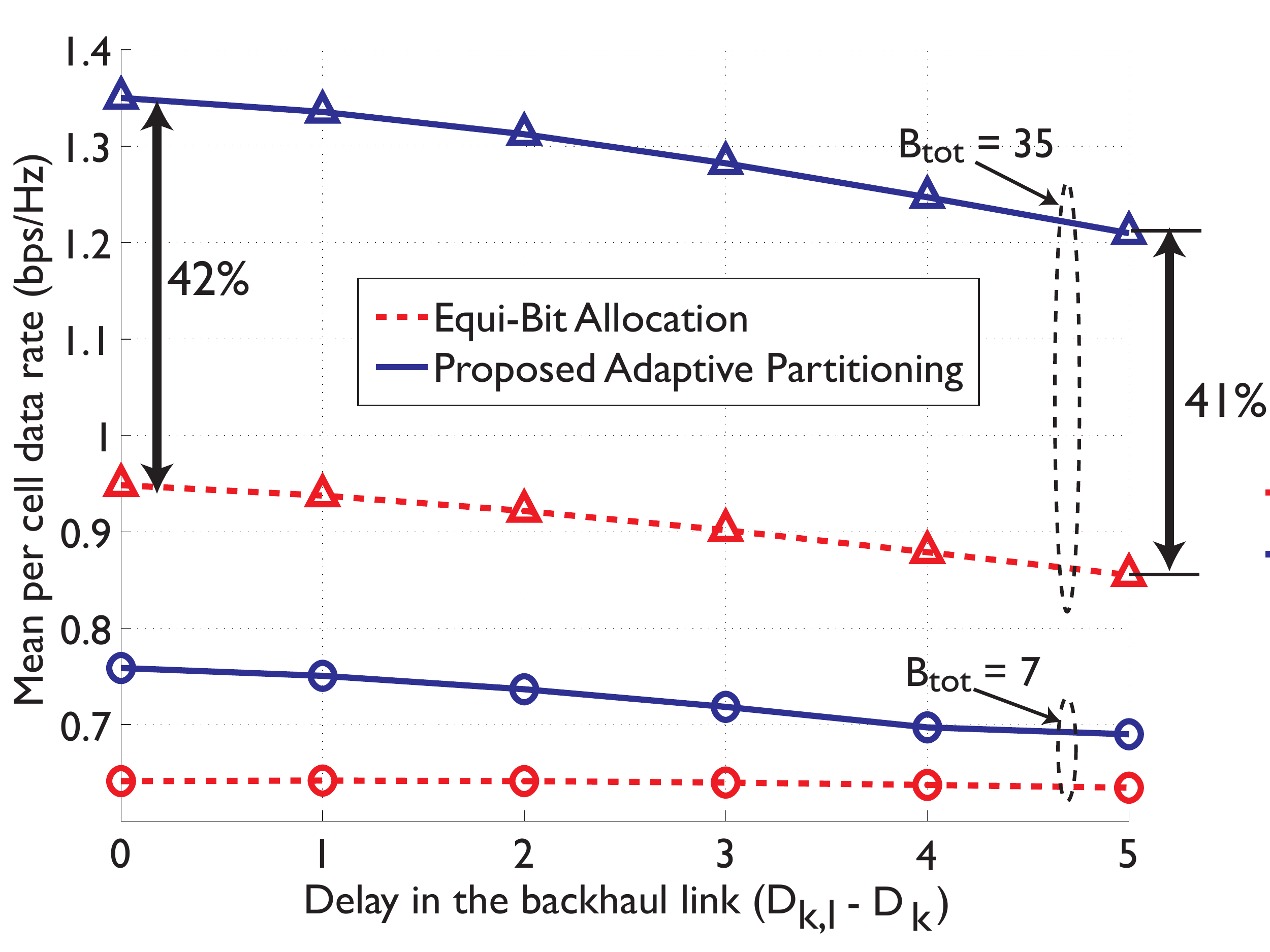}
  \caption{Comparison of the mean data-rate at the cell-edge for different values of delay in the backhaul link.}\label{fig:delay}
\end{figure}

\end{document}